\def\namedlabel#1#2{\begingroup
    #2%
    \def\@currentlabel{#2}%
    \phantomsection\label{#1}\endgroup
}
\newcommand{\scom}{simplicial complex}
\newcommand{\K}{\mathcal{K}}
\newcommand{\intr}{\textnormal{int}}
\crefname{hypothesis}{Hypothesis}{Hypotheses}
\title{Analysis of {Spatial and} Spatiotemporal Anomalies Using Persistent Homology: Case Studies with COVID-19 Data
\thanks{Submitted to the editors on 26 July 2021.
\funding{AH and MAP acknowledge support from the National Science Foundation (grant number 1922952) through the Algorithms for Threat Detection (ATD) program. MAP also acknowledges support from the National Science Foundation (grant number DMS-2027438) through the RAPID program.}} {DN acknowledges support from the National Science Foundation (grant number DMS-2011140).}}
\author{Abigail Hickok\thanks{Department of Mathematics, University of California, Los Angeles, CA, USA
  (\email{ahickok@math.ucla.edu},
  \email{deanna@math.ucla.edu},
  \email{mason@math.ucla.edu}).}
\and Deanna Needell\footnotemark[2]
\and Mason A. Porter\footnotemark[2],\thanks{Santa Fe Institute, Santa Fe, NM, USA}
}
\newcommand*{\addFileDependency}[1]{
  \typeout{(#1)}
  \@addtofilelist{#1}
  \IfFileExists{#1}{}{\typeout{No file #1.}}
}
\newcommand*{\myexternaldocument}[1]{%
    \externaldocument{#1}%
    \addFileDependency{#1.tex}%
    \addFileDependency{#1.aux}%
}
\begin{document}

\maketitle

\begin{abstract}
We develop a method for analyzing spatial and spatiotemporal anomalies in geospatial data using topological data analysis (TDA). To do this, we use persistent homology (PH), which allows one to algorithmically detect geometric voids in a data set and quantify the persistence of such voids. We construct an efficient filtered \scom\ (FSC) such that the voids in our FSC are in one-to-one correspondence with the anomalies. Our approach goes beyond simply identifying anomalies; it also encodes information about the relationships between anomalies. We use vineyards, which one can interpret as time-varying persistence diagrams (which are an approach for visualizing PH), to track how the locations of the anomalies change with time. We conduct two case studies using spatially heterogeneous COVID-19 data. First, we examine vaccination rates in New York City by zip code at a single point in time. Second, we study a year-long data set of COVID-19 case rates in neighborhoods of the city of Los Angeles.

\end{abstract}

\begin{keywords}
Topological data analysis, persistent homology, {spatial data}, spatiotemporal data, COVID-19
\end{keywords}

\begin{AMS}
55N31, 68T09, 92D30.
\end{AMS}


\section{Introduction}

Many systems are spatial in nature. When working with spatial data sets, it is important to study the role of underlying spatial relationships \cite{geostats}. To illustrate this importance, consider the spatiotemporal dynamics of Coronavirus disease 2019 (COVID-19) case rates, which is one of the key motivations for our work. 
The spatial adjacencies between the neighborhoods of a city affect {the dynamics of disease spread \cite{porter2016},} and it is important to account for them. Researchers have studied a wide variety of spatial data sets, such as gross domestic product (GDP) and life expectancy by country \cite{gapminder, mindTheGap} and voting in elections across different regions of a state \cite{feng2021}. Such data sets often also include temporal information (e.g., {daily} COVID-19 case rates), and it is {also important to take it} into account.

We develop new methods for using \emph{topological data analysis} (TDA) to analyze {geospatial and geospatiotemporal} data sets in a way that directly incorporates spatial information. TDA is a way {to study} the ``shape'' of a data set \cite{gunnar-natphys}. Using \emph{persistent homology} (PH), {which is} {an approach} from algebraic topology, {one can algorithmically} find {geometric} voids of different dimensions in a data set {and} quantify the ``persistence'' of these voids \cite{otter2017}. Zero-dimensional (0D) voids are connected components {and} one-dimensional (1D) voids are holes. To quantify the persistence of holes and other voids, one constructs a \emph{simplicial complex} ({which is} a combinatorial description of a topological space) and a \emph{filtration function} (see section \ref{sec:PH}). {In our work, we treat geographical data} as {two-dimensional (2D)} data and construct a 2D filtered \scom\ {(FSC)} to represent it. The computation of PH has yielded insights into a wide variety of areas, such as dynamical systems \cite{chaos_dynamics,chaos_symbolic}, collective behavior \cite{topaz}, neuroscience \cite{neuro_sizemore, neuro_giusti}, materials science \cite{materials}, and chemistry \cite{cyclo}. Spatial applications that have been examined as 2D data sets {using PH} include sensor networks \cite{sensor}, percolation, \cite{disk_percolation}, and city-street networks and other complex systems \cite{feng2020}.

{When we examine} time-dependent data, we use \emph{vineyards}, which were introduced in \cite{vineyards} as a way {to represent time-varying} PH, to incorporate temporal information. One can visualize a vineyard as a continuous stack of persistence diagrams (PDs), with one PD for each time point. The homology classes trace out curves, which are called \emph{vines}, in $\mathbb{R}^3$. At any single point {in time}, a homology class in the PD at that {time} corresponds to a (birth simplex, death simplex) {pair. The} birth simplex creates the homology class, and the death simplex destroys the homology class (see section \ref{sec:PH}). In a vineyard, a vine corresponds to a sequence of (birth simplex, death simplex) pairs. See section~\ref{sec:vineyard} for the definition of a vineyard.


\subsection{Our Contributions}\label{sec:contributions}

We use TDA to analyze local extrema of real-valued geospatial data\footnote{See {s}ection \ref{sec:filtration} for our definition of a ``local maximum'' and a ``local minimum'' of a real-valued function on a discrete set of geographical regions.}. Our approach captures both local information (specifically, the geographical locations and the values of the local extrema) and global information about the relationships between the {extrema. The global information includes the} extent to which extrema are ``spatially separated'' (see Figure \ref{fig:separation}).

\begin{figure}
    \centering
    \subfloat[]{\includegraphics[width = .45\textwidth]{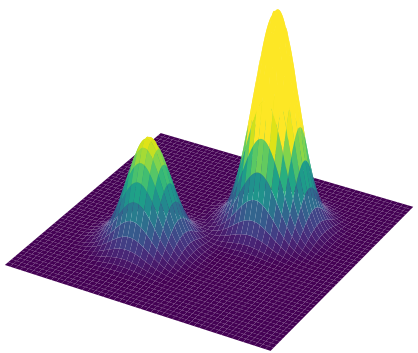}\label{subfig:separation}}
    \hspace{5mm}
    \subfloat[]{\includegraphics[width = .45\textwidth]{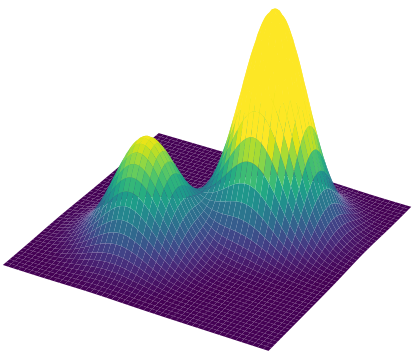}\label{subfig:nonseparation}}
    \caption{(a) The graph of a function $f: \mathbb{R}^2 \to \mathbb{R}$ that has two ``well-separated'' local maxima. (b) The graph of a function $g: \mathbb{R}^2 \to \mathbb{R}$ whose two local maxima have the same locations and values as $f$, {but are not well-separated from} each other.}
    \label{fig:separation}
\end{figure}

To the best of our knowledge, {existing} methods of analyzing {local} extrema yield only local information. {One can check} {whether} a geographical region is an extremum by {comparing} its value to those of its {neighboring regions}. However, this approach does not provide any global information about the extrema. {For example,} it cannot distinguish between the two cases {in }Figure \ref{fig:separation}.

{Examining} vineyards allow us to measure the persistence {of} extrema {with} time, observe {how} spatial {separations} between extrema {change} {with} time, and track {how} geographical locations {of} extrema change {with} time. We accomplish the {last of these} by using the vineyards to match the extrema at one time to the corresponding extrema at another {time. (They may not be at the same geographical locations.)} We identify the geographical locations {of} extrema by {examining} the sequence of (birth simplex, death simplex) pairs for each vine. To the best of our knowledge, {the present paper} is the first {paper} that uses information about the sequence of (birth simplex, death simplex) pairs for each vine, rather than using {only} the (birth, death) filtration values for each vine. {A naive approach, such as comparing each region to its {neighboring regions} at each time step, does not come with a natural way {to match} the extrema {that one identifies} at different {times} and does not provide information about changes in global structure. {With} our approach,} we are able to track how the global spatial structure {of} data changes {with} time.

Another contribution of our paper is a {new} method to construct an ``efficient'' simplicial complex whose underlying space\footnote{{The \emph{underlying space} of a \scom\ is the union of its simplices. We note that it is common in studies of TDA for authors to conflate the combinatorial and topological structures of a simplicial complex.}} is homeomorphic to a geographical space (which is the set of regions, as we will explain shortly)\footnote{The \scom\ is ``efficient'' in the sense that it minimizes the number of simplices.}.
{In {our applications, we possess geographical data in the form of {\sc shapefile}s}. Each geographical region (e.g., a neighborhood or zip code) is represented in {a} {\sc shapefile} by a polygon (or by multiple polygons, if the region is disconnected) with {many} vertices (about 100--1000 vertices, depending on the particular {\sc shapefile} and the particular region). {These} polygons approximate the real-life boundaries of the geographical regions. A naive approach to building a simplicial complex is to simply triangulate each of the polygons. However, this approach has two issues. The first is that there are often small overlaps between the polygons or spurious gaps between the polygons because the polygon boundaries {do not exactly match} the real-life geographical boundaries. The vertices of a polygon often lie in the interior of another polygon. The second issue is that simply triangulating {these} polygons, which each have a very large number of vertices, would create orders-of-magnitude more simplices than are necessary to represent a geographical space.} It is important to attempt to minimize the number of simplices in a simplicial complex because PH and vineyard computation times are very sensitive to the number of simplices.

{Rather than naively triangulate the given polygons, we use the {\sc shapefile} of a geographical space to infer adjacency information about the regions; we then use only this information to build a simplicial complex for that geographical space. In {the resulting} simplicial complex, each region is represented by a union of triangles. We use about 1--10 triangles per region, depending on the number of neighbors {of the region}{. By} contrast, the naive approach above {requires} about 100--1000 triangles per region. Two adjacent regions that have a connected intersection share exactly one edge in our simplicial complex, except in rare special cases that we will discuss in section \ref{scom_construction}.}
In our simplicial complex for a geographical space, the union of any subset of geographical regions is homeomorphic to the {underlying space of the} {simplicial subcomplex} (see section \ref{scom_construction} for the definition of a {simplicial subcomplex}) that is induced by the union of the corresponding {triangles}. {When the geographical regions satisfy the mild assumptions \ref{geo_firstcond}--\ref{geo_lastcond} that we define in section \ref{scom_construction},} our construction uses the {minimum} number of simplices that {is possible for a \scom\ with the property above. {(See Property \ref{scom_cond} in section \ref{scom_construction}}.)}

As case studies, we apply our {approach} to two data sets. The first data set is a geospatial data set of per capita vaccination rates in New York City (NYC) by {zip code \cite{NYCvax}}. The homology classes {correspond to} zip codes in which the vaccination rate is either lower or higher (depending on choices that one can make in our approach) than in the {neighboring} zip codes. {The estimates of these rates are at a single point in time (23 February 2021).} The second data set consists of 14-day mean per capita COVID-19 case rates in neighborhoods in the city of Los Angeles (LA) in the time period 25 April 2020--25 April 2021. Modeling the spatiotemporal spread of COVID-19 is a complex task \cite{arino2021describing, vesp2020}. In this geospatiotemporal data set, the homology classes of our approach correspond to COVID-19 anomalies, which are regions whose case rates are higher than in the {neighboring} regions.\footnote{We examine \emph{local} maxima in the case-rate data. This contrasts to COVID-19 ``hotspots,'' which the CDC has defined using an absolute threshold for the number of cases and criteria that are related to the temporal increase in the number of cases \cite{cdc_hotspot}.} It is important to examine such anomalies, as COVID-19 spreads with significant spatial heterogeneity and thus has heterogeneous effects on different areas.\footnote{Other scholars have studied contagions using TDA in ways that do not yield topological features with geographical meaning. For example, recent work used TDA to study the spatiotemporal spread of COVID-19 \cite{yuliagel_covid} and Zika \cite{zika}. These papers {examined} topological features in atmospheric data, which were then used to forecast case rates. TDA was also used in \cite{taylor} to {study} the Watts threshold model of a social contagion on noisy geometric networks.
}
Many factors (such as mobility, population density, socioeconomic differences, and racial demographics) play a role in how COVID-19 affects regions differently \cite{mobilitypopdens, spatial_hetero, race_covid}. In our case study of COVID-19 case rates in LA, we construct a vineyard that (1) {conveys} which anomalies are most persistent in time and (2) reveals how the anomalies move geographically with time.


\subsection{Related Work}

Our method addresses several limitations of previous efforts to combine TDA with geospatial analysis. In \cite{brexit}, Stolz {et al.} studied the percentage of United Kingdom voters by district that voted to leave the European Union in {the} ``Brexit'' referendum. The holes that they identified using PH corresponded to districts that voted differently than the {neighboring} districts. However, {their approach} {does} not distinguish between homology classes that were merely noise and homology classes that corresponded to small geographical districts. In \cite{feng2021}, Feng and Porter developed an {approach to study PH by constructing FSCs} using the level-set method {\cite{osher2003}} of front propagation {from scientific computation}\footnote{The name ``level-set method'' may cause confusion. Importantly, the level-set simplicial complex of \cite{feng2021} is not the simplicial subcomplex that {has} simplices with some prescribed filtration value (i.e., a level set of the filtration values of {a} simplicial complex).}{.} Using their level-set complexes, they examined the percentage of voters in each precinct of California counties that voted for a given candidate (e.g., Hillary Clinton) in the 2016 {United States} presidential election. The homology classes represent precincts that voted more heavily for Clinton than the {neighboring precincts;} {these regions} are ``islands of blue in a sea of red''. The level-set complexes in \cite{feng2021} have two key limitations. The first is that they cannot handle time-dependent data, as they are built to study {either} data at a single point in time or data that has been aggregated over some time window to yield time-independent data. The second limitation is that {these \scom es} reduce real-valued data (e.g., the percentage of voters who voted for Clinton) to binary data (e.g., whether or not the majority voted for Clinton). Consequently, in this example, the level-set-based PH does not capture the extent to which a blue ``political island'' voted more heavily for Clinton. By contrast, our approach is designed specifically to capture such information. As a trade-off, we no longer capture the geographical sizes of the political islands. {For further discussion, see} Feng, Hickok, and Porter \cite{tda_spatial}, who applied the level-set filtration to study 
{the cumulative case count in Los Angeles on one specific day.}

Our new approach {to compute} PH is also able to resolve some other technical issues in \cite{feng2021}. In particular, some of the homology classes in the level-set approach {of} \cite{feng2021} are geographical artifacts that are indistinguishable from true features of a data set. {By contrast, the finite 1D homology classes in our approach are either in} one-to-one correspondence with the local maxima of {a} real-valued geospatial function or {in} one-to-one correspondence with {its} local minima, depending on the choices that one makes. Additionally, unlike the level-set approach in \cite{feng2021}, we are able to detect extrema that are adjacent to the boundary of {a} geographical space.

Other methods {to construct} simplicial complexes from geospatial data, such as rasterization of a {\sc shapefile} {or} treating the regions as a point cloud, require a trade-off between the number of simplices and the accuracy of the representation of the geographical regions. For example, the level-set-based PH method of \cite{feng2021} uses orders-of-magnitude more simplices to achieve sufficient resolution of the smallest geographical regions (e.g., densely populated urban centers that are important to analyze). See section~\ref{sec:conclusion} for further discussion.

We use vineyards in the present paper, but there are also other ways to study the topology of time-varying data. For example, zigzag PH \cite{zigzag} was used in \cite{CorcoranJones} to analyze time-dependent point clouds (such as swarms) and in \cite{munch} to study time-delay embeddings of dynamical systems. Crocker plots and crocker stacks (i.e., stacks of crocker plots for different values of a smoothing parameter) illustrate how the Betti numbers of a time-dependent point cloud change with time and with a scale parameter $\epsilon$ \cite{crocker}. {Additionally,} Kim and M\'emoli \cite{kim_memoli} used multiparameter PH \cite{multi} to study time-dependent point clouds. In appendices \ref{sec:multi} and \ref{sec:zigzag}, we show how one can use multiparameter PH \cite{multi} and multiparameter zigzag PH \cite{zigzag} to {study} our COVID-19 spatiotemporal data sets.


\subsection{Organization}

Our paper proceeds as follows. In section \ref{sec:background}, we {briefly} review relevant topological background. In section~\ref{scom_construction}, we formulate how we construct simplicial complexes. In section~\ref{sec:filtration}, we define several filtration functions and discuss how to interpret the resulting PDs and vineyards. In section~\ref{sec:apps}, we apply our method to the LA and NYC data sets. In section~\ref{sec:discussion}, we discuss our {methodological choices}. In section \ref{sec:conclusion}, we summarize our work and discuss some of its implications. In the appendix, we discuss technical details of the \scom\ {construction}, discuss alternative topological approaches for studying PH in geospatiotemporal data, {provide further information about the LA results, compare our approach to an ``all-but-one'' statistical test, and show some demographic data.} Our code is available at \url{https://bitbucket.org/ahickok/vineyard/src/main/}.


\section{Background}\label{sec:background}


\subsection{Persistent Homology (PH)}\label{sec:PH} 

We {give a brief introduction to} persistent homology (PH). See \cite{otter2017} for a more thorough {discussion of it}.

{A \emph{$k$-simplex} is $k$-dimensional polytope that is a convex hull of $k+1$ vertices. A convex hull of a subset of these vertices is a \emph{face} of the simplex.} A \emph{simplicial complex} $\K$ is a set of simplices that satisfies two requirements: (1) if $\sigma \in \K$ is a $k$-simplex, then every face of $\sigma$ is in $\K$; (2) if $\sigma$ and $\tau$ are simplices in $K$, then $\sigma \cap \tau$ is a face of both $\sigma$ and $\tau$.

{A \emph{filtered simplicial complex} (FSC) is a nested sequence $\K_{\alpha_0} \subseteq \cdots \subseteq \K_{\alpha_n} = \K$ of \scom es for {some sequence $\{\alpha_0, \ldots, \alpha_n\}$ of indices}. See Figure~\ref{fig:fsc} for an example of an FSC.} A \emph{filtration function} (or simply a \emph{filtration}) $f: \K \to \mathbb{R}$ is a function such that if the simplex $\tau \in \K$ is a face of $\sigma \in \K$, then $f(\tau) \leq f(\sigma)$. {A pair $(\K, f)$ induces an FSC as follows.} Let $\K_{\alpha} := \{ \sigma \in \K \mid f(\sigma) \leq \alpha \}$ be the \emph{$\alpha$-sublevel \scom}, and let $\{\alpha_0, \ldots, \alpha_n\}$ be the image of $f$. The sequence $\K_{\alpha_0} \subseteq \cdots \subseteq \K_{\alpha_n} = {\K}$ is a nested sequence of simplicial complexes. {In our paper, we often refer to the pair $(\K, f)$ as the FSC itself. We do this because it is the most natural way to define the FSCs for our applications.}

\begin{figure}
\begin{minipage}{.5\textwidth}
    \centering
    \subfloat[$\K_0$]{\includegraphics[width = .2\linewidth]{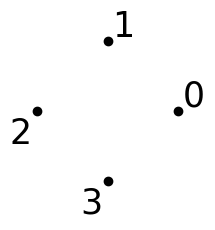}}
    \subfloat[$\K_1$]{\includegraphics[width = .2\linewidth]{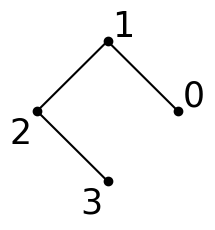}}
    \subfloat[$\K_2$]{\includegraphics[width = .2\linewidth]{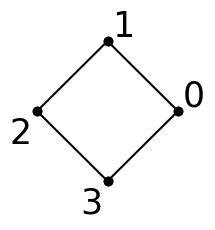}}
    \subfloat[$\K_3$]{\includegraphics[width = .2\linewidth]{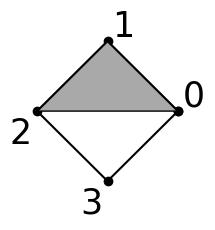}}
    \subfloat[$\K_4$]{\includegraphics[width = .2\linewidth]{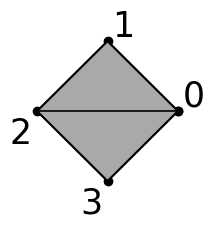}}
    \caption{An example of nested simplicial complexes in a filtered simplicial complex.
    }

    \label{fig:fsc}
\end{minipage}%
\hspace{5mm}
\begin{minipage}{.45\textwidth}
    \centering
    \includegraphics[width = \linewidth]{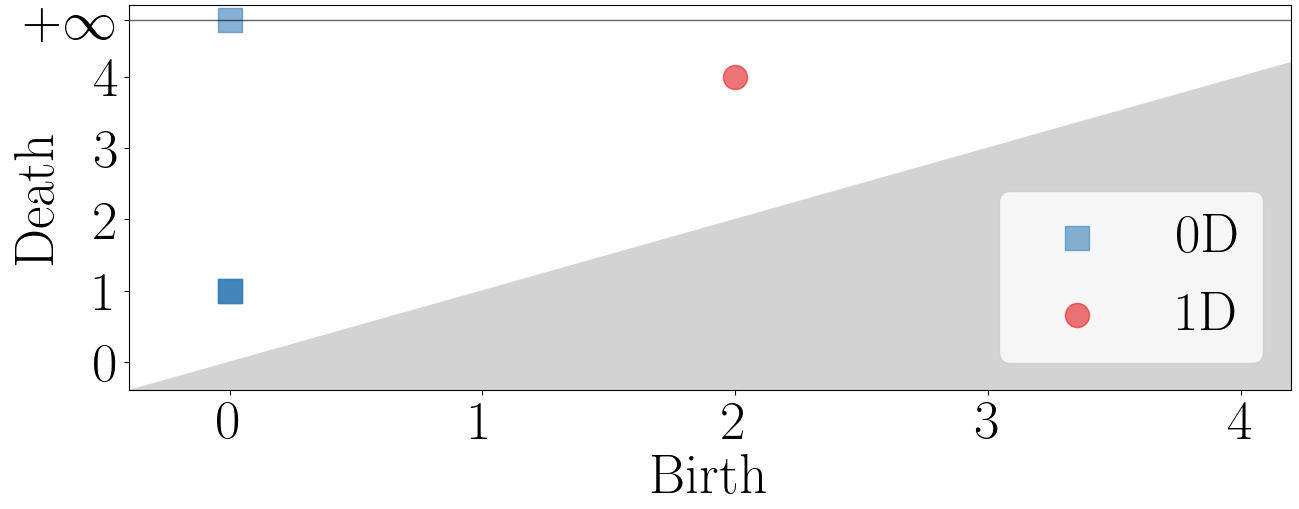}
    \caption{The persistence diagram {of} the filtered simplicial complex in Figure~\ref{fig:fsc}.}
    \label{fig:fscPD}
\end{minipage}
\end{figure}

We compute the homology of each $K_{\alpha_i}$ over a field $\mathbb{F}$, which we set to $\mathbb{Z}/2\mathbb{Z}$ in the present paper. {Let $H_p(\K_{\alpha_i}, \mathbb{F})$ denote the $p$-dimensional homology of $\K_{\alpha_i}$ over $\mathbb{F}$.} Homology classes represent connected components, holes, and higher-dimensional voids in a simplicial complex; {specifically,} $p$-dimensional homology classes represent $p$-dimensional {``holes.''} The inclusion {relationship $\K_{\alpha_i} \xhookrightarrow{} \K_{\alpha_{i+1}}$ between subcomplexes} induces a map $\iota_i: {H_p}(\K_{\alpha_i}, \mathbb{F}) \to {H_p}(\K_{\alpha_{i+1}}, \mathbb{F})$ from the {homology of $\K_{\alpha_i}$ to the homology of $\K_{\alpha_{i+1}}$.} The inclusion map $\iota_i$ lets us track an element of {$H_p(\K_{\alpha_i}, \mathbb{F})$} to an element of {$H_p(\K_{\alpha_{i+1}}, \mathbb{F})$}. {The $p$-dimensional {PH} of an FSC is the pair}
\begin{equation}
    \left(\{{H_p}(\K_{\alpha_i}, \mathbb{F})\}, \{\iota_i\} \right)_{0\leq i < n}\,.
\end{equation}
We say that a {homology class $\gamma$ is \emph{born}} at filtration level $\alpha_i$ if {$i$ is the {smallest index for} which $\gamma$ is an element of $H_p(\K_{\alpha_i}, \mathbb{F})$.} We say that the homology class $\gamma$ \emph{dies} at filtration level $\alpha_j$ if $\alpha_{j-1}$ is the last filtration level at which $\gamma$ exists. That is, $\iota_{j-1} \circ \cdots \circ \iota_i$ maps $\gamma \in H_p(\K_{\alpha_i}, \mathbb{F}$) to $0$ in $H_p(\K_{\alpha_j}, \mathbb{F})$ and for all $k < j-1$, we have $\iota_k \circ \cdots \circ \iota_i(\gamma) \neq 0$. Not every homology class dies; we refer to classes that do die as \emph{finite} and classes that do not die as \emph{infinite}.

The Fundamental Theorem of Persistent Homology yields a set of generators for the {PH} of an FSC \cite{edel_book, dey_book}. Each generator is a homology class. A generator has a \emph{birth simplex} $\sigma_b$ that creates the homology class and (if finite) a \emph{death simplex} $\sigma_d$ that destroys the homology class. If one is computing homology in dimension $p$, then $\sigma_b$ is a $p$-dimensional simplex and $\sigma_d$ is a $(p+1)$-dimensional simplex. The simplex pair $(\sigma_b, \sigma_d)$ represents the homology class. For example, in Figure~\ref{fig:fsc}, there is one 1D PH generator. Its birth simplex is the edge $(0, 3)$ because this is the edge that completes the loop {that encircles the hole}, and its death simplex is the triangle $(0, 2, 3)$ because this is the triangle that fills in the {hole}. The birth filtration level of the homology class is $f(\sigma_b)$ {and} the death filtration level (if finite) is $f(\sigma_d)$.

A \emph{persistence diagram} (PD) is a way of representing PH as a multiset of points in the {extended plane $\overline{\mathbb{R}}^2$}. Each {off-diagonal} point represents a {generator of the {PH}; the point's coordinates are the {homology} class's birth and death filtration levels. One includes the points on the diagonal for technical reasons; one can think of them as homology classes that die instantaneously upon birth. See Figure~\ref{fig:fscPD} for an example of a PD.


\subsection{Vineyards}\label{sec:vineyard}

{The examination of \emph{vineyards} is one way to study} time-varying PH \cite{vineyard}. A \emph{time-dependent filtration function} on a simplicial complex $\K$ is a function $f:[t_0, T] \times \K \to \mathbb{R}$ {such that $f(t, \cdot)$ is a filtration for all $t$.} We compute the PH of $(\K, f(t, \cdot))$ for all times $t$. We visualize the vineyard in $\mathbb{R}^2 \times [t_0, T]$ as a continuous stack of PDs (see Figure~\ref{fig:vineyard}). The points in the PDs trace out curves {with} time; these curves are the \emph{vines}. Each vine corresponds to a homology class; a vine is the graph of the birth and death filtration levels of a particular homology class {as a function of} time. The homology class that is represented by a vine has a \emph{time-dependent birth simplex} $\sigma_b(t)$ and (if finite) a \emph{time-dependent death simplex} $\sigma_d(t)$. At time $t$, the homology class is created by the simplex $\sigma_b(t)$ at filtration level $f(t, \sigma_b(t))$ and {(if finite) destroyed by the simplex $\sigma_d(t)$ at filtration level $f(t, \sigma_d(t))$}. The functions $\sigma_b(t)$ and $\sigma_d(t)$ are piecewise constant. We measure the overall \emph{persistence} of a vine by calculating $\int_{t_0}^T [f(t, \sigma_d(t)) - f(t, \sigma_b(t))]\, \mathrm{d}t$.

Cohen-Steiner et al. \cite{vineyards} developed an algorithm for computing vineyards when they {introduced them.} One computes the initial PH at time $t = t_0$, and one then updates {the pairings of birth and death simplices} as the order of the simplices (as induced by $f(t, \cdot)$) changes {with} time. Each change in the order of the simplices occurs one transposition at a time. One can make these updates} in $O(m)$ time (where {$m$} is the number of simplices) per transposition of simplices.

\begin{figure}
    \centering
    \includegraphics[width= .4\textwidth]{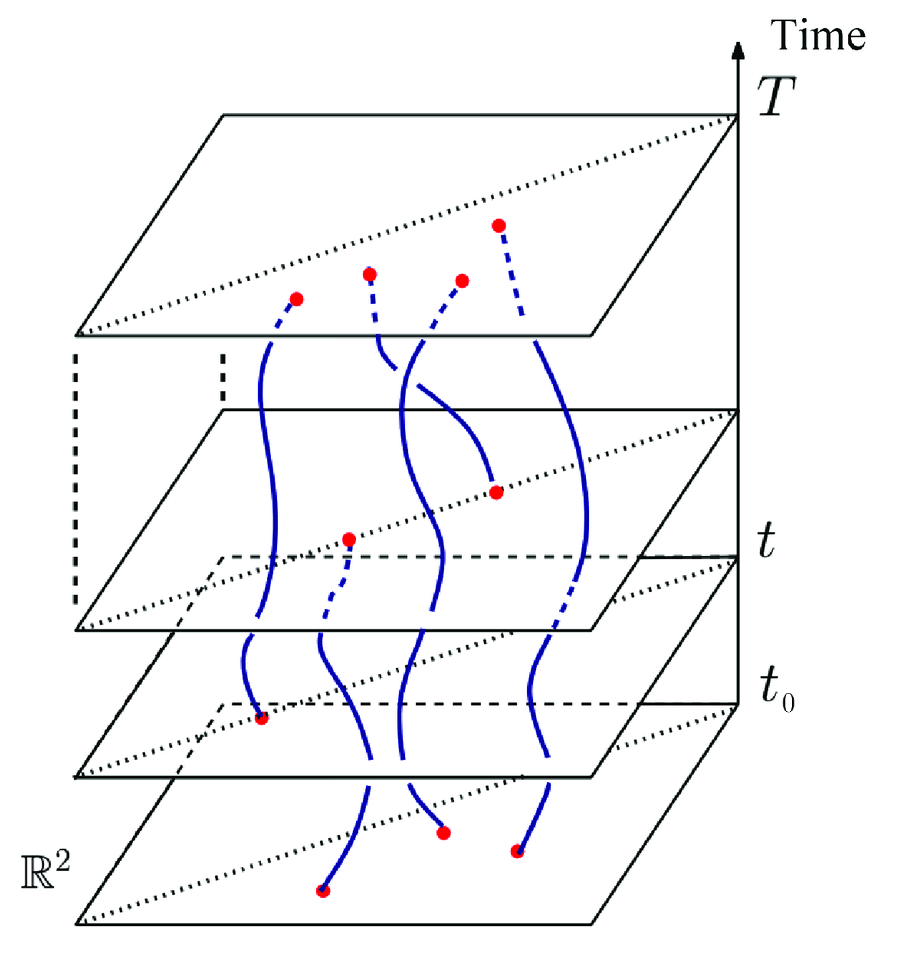}
    \caption{An example of a vineyard. Each curve is a vine in the vineyard. This figure is a slightly modified version
    of a figure that appeared originally in \cite{vineyard}.
    }
    \label{fig:vineyard}
\end{figure}


\section{Constructing a Simplicial Complex}\label{scom_construction}

We now show how we construct a simplicial complex $\K$ from geographical data (e.g., a {\sc shapefile} that specifies approximate geographical boundaries {of a set of geographical regions}). We partition {a given} geographical space into \emph{regions}. In section~\ref{sec:nyc}, the regions are zip codes in NYC; in section~\ref{sec:LA}, the regions are neighborhoods in the city of LA. Let $S$ be the set of regions. We refer to the complement of $\bigcup_{R \in S} R$ as the \emph{exterior region}. We construct a 2D simplicial complex $\K$ with the following property:
\begin{enumerate}
    \item[\namedlabel{scom_cond}{(P)}] {There is an assignment of 2D simplices to regions such that the union of any subset of regions is homeomorphic to the {underlying space of the} \emph{simplicial subcomplex}\footnote{The {simplicial subcomplex} that is induced by {a set $E \subseteq \K$} is 
    the smallest simplicial complex $\K'$ that contains the set $E$ of simplices. That is, if $\K''$ is a simplicial complex that contains $S$, then $\K' \subseteq \K''$. When $\K$ is 1D, a {simplicial subcomplex} is equivalent to an induced subgraph.} that is induced by the union of the corresponding 2D simplices.}
\end{enumerate}
In Figure~\ref{fig:exampleK}, we {show} an example of our construction{, which we discuss in this section and {present} in more detail in Appendix \ref{sec:details}. Under the mild assumptions \ref{geo_firstcond}--\ref{geo_lastcond} that we define shortly, our \scom\ has the minimum number of simplices that is possible for a \scom\ {that satisfies property \ref{scom_cond}.}}
Constructing an efficient \scom\ is important because {the time that it takes for} TDA computations {depends sensitively on} the number of simplices {in a \scom}.

\begin{figure}
	    \centering
	    \includegraphics[scale=.45]{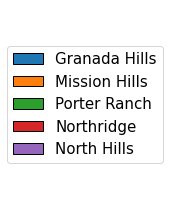}
	   \subfloat[]{\includegraphics[scale=.4]{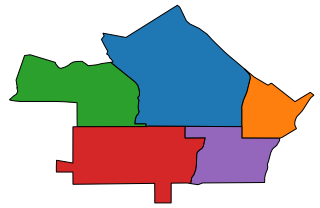}\label{fig:geo_granada}}
	    \hspace{5mm}
	    \subfloat[]{\includegraphics[scale=.35]{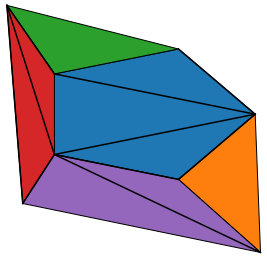}\label{fig:K_granada}}
	    \caption{(a) A set $S$ of geographical regions{, as given by a \sc{shapefile} \cite{lashp}.} (b) The resulting \scom\ $\K$.}
	    \label{fig:exampleK}
	\end{figure}

In our case studies, the geographical data take the form of {\sc shapefile}s. In a {\sc shapefile}, each region is represented by a {\emph{polygon with holes}\footnote{ A polygon with holes {is} $P = Q - \bigcup_{i=1}^h \intr(H_i)$, where $X$ is a polygon that encloses {polygons} $H_1, \ldots, H_h$ (the holes) \cite{artgallery} {and} $\intr(H_i)$ denotes the interior of $H_i$. {It is possible to have $h=0$ holes.}} (or by multiple polygons with holes}, if the region is disconnected) that closely approximates the actual geographical region. {(A {\sc shapefile} stores the coordinates of the boundaries of the polygons.)} For an example of {\sc shapefile} data, see Figure \ref{fig:geo_granada}. As we discussed in section \ref{sec:contributions}, the polygon boundaries are not always aligned perfectly, so their interiors sometimes overlap {and gaps can occur between them}. Therefore, to construct a \scom\ $\K$, we must do more than merely triangulate these polygons. Additionally, the polygons in our {\sc shapefile}s have roughly between 100 and 1000 vertices, which is many more vertices per region than in {the \scom\ $\K$ that we {will} construct shortly.}

We make the following assumptions about geographical regions:
\begin{enumerate}
    \item[\namedlabel{geo_firstcond}{(A1)}] There are a finite number of regions, and each region has a finite number of connected components.
    
    \item[\namedlabel{geo_bdrycond}{(A2)}] {Each {component of a region} is homeomorphic to $D_0 - \bigcup_{i=1}^h \intr(D_i)$, where $D_0$ is a closed disk that encloses some number (which can be $0$) of other closed disks $D_1, \ldots, D_h$ ({i.e.,} the holes of the region). {For all $i \neq j$, the intersection $D_i \cap D_j$ {has} at most one point.}
    See, for example, the West Vernon region in Figure \ref{fig:geo_westvernon}{; it is} homeomorphic to $D_0 - D_1$ {(an annulus)} for two disks $D_0$ and $D_1$ that do not intersect. {(In our case studies, it is rare for any of the disks to intersect.)}}
    
    \item[(A3)] The intersection between any two regions has a finite number of components, and the interiors of the regions do not intersect.
    
    \item[\namedlabel{geo_lastcond}{(A4)}] The intersection between three or more regions is either a point or $\emptyset$.
\end{enumerate}

Assumptions \ref{geo_firstcond}--\ref{geo_lastcond} are very reasonable for human-made geographical boundaries. We do not even require the regions to be simply connected or the region intersections to be connected. In Figure~\ref{fig:geo_granada}, we illustrate the most typical situation that we encounter. In this example, LA neighborhood Granada Hills is homeomorphic to a disk and its boundary intersects the boundaries of five neighboring regions (counting the exterior region). In Figures~\ref{fig:georegions} and~\ref{fig:geo_koreatown}, {we illustrate a few other configurations that can arise in geospatial applications.} 

\begin{figure}
    \centering
    \subfloat[Valley Glen]{\includegraphics[width = .45\textwidth]{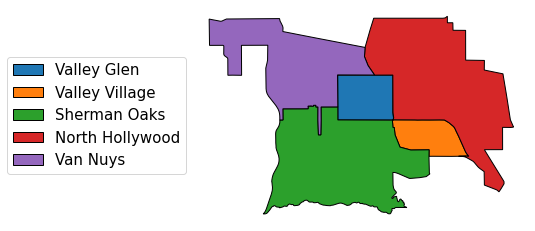}\label{fig:geo_valleyglen}}
    \hspace{10mm}
    \subfloat[West Vernon]{\includegraphics[width = .45\textwidth]{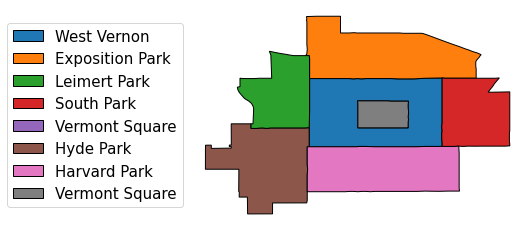}\label{fig:geo_westvernon}}
    \caption{Various neighborhoods {of Los Angeles}{, as given by a {\sc shapefile} \cite{lashp}.}
    (a) {The four neighborhoods Valley Glen, Valley Village, Sherman Oaks, and North Hollywood intersect in a point.} (b) The neighborhood West Vernon has a hole because of its neighbor Vermont Square.
    }
    \label{fig:georegions}
\end{figure}

We now outline our procedure for building a \scom.
{For each region $R$, we construct a {``reduced''} polygon with holes $P^R$ that has orders-of-magnitude fewer vertices than the polygons with holes in the associated {\sc shapefile}. The number of holes in $P^R$ is equal to the number of holes of the geographical region $R$. We {glue} the boundaries of $\{P^R \mid R \in S\}$ together in a way that respects the geographical region boundaries. {We then} triangulate each of the polygons to obtain a 2D \scom\ $\K$.} We assign a 2D simplex $\sigma \in \K$ to the region $R$ whose polygon $P^R$ originally contained $\sigma$. In Figure~\ref{fig:exampleK}, we show an example of {the resulting $\K$.} Our code for our simplicial-complex algorithm is available at \url{https://bitbucket.org/ahickok/vineyard/src/main/}.\footnote{This code has one limitation that the algorithm in the present paper does not{.} It requires that no interior region (i.e., a region that is contained {in} the outer boundary of another region) intersects any other interior region. This does not occur in our data, and we believe that it does not occur in most geographical spaces.} In the remainder of this section, we discuss the details of this process.


\subsection{Constructing a Reduced Polygon with Holes for each Region}

Without loss of generality, we assume that each region is connected; if not, we treat each component of a region as if it were its {own} region. For each region $R$, we construct a reduced {polygon with holes} $P^R$ using only adjacency information that we infer from a {\sc shapefile}. {Let $D_0, D_1, \ldots, D_h$ be the disks in the statement of assumption \ref{geo_bdrycond}, and let $B_i = \partial D_i$.} Under the geographical assumptions~\ref{geo_firstcond}--\ref{geo_lastcond}, the intersections of a region $R$ with its neighbors are such that for each {boundary $B_i$}, one can order the neighbors in clockwise (or counterclockwise) fashion, possibly with repetition\footnote{Theoretically, several 0D intersections can be adjacent to each other, although this scenario does not occur in our data sets. That is, in principle, there can {exist} a sequence $\{N_i, \ldots, N_{i+k}\}$ of neighbors such that $N_j \cap R $ is the same point $p$ for all $j$. The order of this sequence is not determined uniquely by the intersections of the neighbors with $R$. Instead, we order them in the order in which they appear clockwise (or counterclockwise) around the point $p$. This sequence must be finite because there are a finite number of regions and \ref{geo_bdrycond} implies that $N_{j_1} \neq N_{j_2}$ if $j_1 \neq j_2$.}. {Let $S_i$ denote this sequence of neighbors around $B_i$.} We list intersections with the exterior region in the same manner as for any other neighboring region. We also record whether each intersection is 1D or 0D. For example, in Figure \ref{fig:geo_valleyglen}, the clockwise sequence of neighbors around the boundary of Valley Glen is \{Van Nuys, North Hollywood, Valley Village, Sherman Oaks\}. The intersection with Valley Village is 0D and the other intersections are 1D. For regions such as West Vernon in Figure \ref{fig:geo_westvernon}, we obtain a sequence {$S_i$} for each boundary {$B_i$}. Each sequence is unique up to the choice of starting neighbor.

Given a sequence of neighbors for each {boundary $B_i$} (which, if necessary, we adjust as in Appendix~\ref{appendix:nbr_adjustment}), we construct a {polygon with holes $P^R$} as follows. {Let $(P')^R$ be a polygon that has one edge for each $N \in S_0$ such that the corresponding component of $N \cap B_0$ is 1D. Let $\{H_i^R\}_{i=1}^{h}$ be a {set} of polygons that are contained in $(P')^R$ and satisfy the following properties:
{
\begin{enumerate}
    \item $H_i^R$ has one edge 
    {for each $N \in S_i$ such that the corresponding component of $N \cap B_i$ is 1D},
    
    \item $H_i^R \cap H_j^R \neq \emptyset$ if and only if $D_i \cap D_j \neq \emptyset$, 
    \item $P^R \cap H_i^R \neq \emptyset$ if and only if $D_0 \cap D_i \neq \emptyset$, and
    \item {if} the intersection of two polygons in $\{P^R, H_1^R, \ldots, H_{h}^R\}$ is nonempty, then the intersection is a vertex.
\end{enumerate}
}
The locations of the vertices do not matter. We define $P^R$ to be $(P')^R - \bigcup_{i=1}^{h}\intr(H_i^R)$, which is homeomorphic to $R$ by assumption \ref{geo_bdrycond}. Finally, we annotate each edge of $P^R$ with the neighbor that corresponds to it. We also annotate each vertex with the sequence of its adjacent regions, which we list in clockwise order starting with $R$.}


\subsection{Gluing Together the Polygons with Holes}

We glue the {{polygons with holes} $\{P^R \mid R \in S\}$} along their edges according to their edge and vertex annotations. More precisely, if $P^{R_1}$ has $n$ nonadjacent edges with the annotation $R_2$ (which is the typical situation when $R_1 \cap R_2$ has $n$ components that are 1D), then $P^{R_2}$ has exactly $n$ nonadjacent edges with the annotation $R_1$. For example, in Figure~\ref{fig:koreatown}, $R_1  = $ Koreatown and the annotated {polygon with holes} $P^{R_1}$ has two edges with the annotation $R_2 = $ Wilshire Center. Let $(u, v)$, with $u$ and $v$ in clockwise order, be the vertices of an edge in $P^{R_1}$ with annotation $R_2$. Because the $n$ edges are nonadjacent, {$u$ and $v$ must each} have at least $3$ neighbors (including $R_1$ {and $R_2$}). For example, in  Figure~\ref{fig:koreatown}, again consider the two edges with the annotation Wilshire Center. The two vertices $u_1$ and $v_1$ of one edge have the adjacency sequences \{Koreatown, Hancock Park, Wilshire Center\} and \{Koreatown, Wilshire Center, Little Bangladesh\}, respectively. The two vertices $u_2$ and $v_2$ of the other edge have {the} adjacency sequences \{Koreatown, Little Bangladesh, Wilshire Center\} and \{Koreatown, Wilshire Center, Pico-Union\}, respectively. For a given $(u, v)$, we seek an edge $(x,y)$ (with $x$ and $y$ in clockwise order) in $P^{R_2}$ with the annotation $R_1$ such that (1) $u$ and $y$ are annotated with the same sequences and (2) $v$ and $x$ are annotated with the same sequences. We know that there must be at least one such edge because $(u,v)$ represents a component of $R_1 \cap R_2$ and there is some edge in $P^{R_2}$ that represents the same component (so its vertices have the same sequences of adjacent regions as $u$ and $v$). In Lemma \ref{lem:unique_edge}, we prove that there is a unique such edge. If there are $n > 1$ consecutive edges $e_0, \ldots, e_{n-1}$ on the boundary of $\K_{R_1}$ with annotation $R_2$, then there are $n$ consecutive edges $e_0', \ldots, e_{n-1}'$ on the boundary of $\K_{R_2}$ with annotation $R_1$. This situation arises precisely because of the adjustments that we discuss in Appendix \ref{appendix:nbr_adjustment}. We glue $e_i$ to $e_{n-i}'$ for all $i$. If $R_1 \cap R_2$ is homeomorphic to $S^1$, then the choice of {$e_0'$} as the first edge in $P^{R_2}$ is not unique, but all choices result in topologically equivalent spaces. In Figure \ref{fig:K_koreatown}, we show the result of the gluing process for Koreatown and its neighbors.

\begin{figure}
    \centering
    \includegraphics[scale=.45]{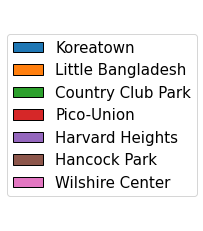}
    \subfloat[]{\includegraphics[scale=.55]{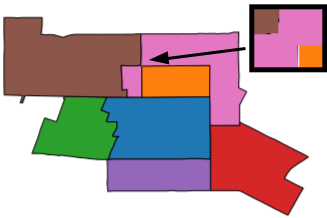}\label{fig:geo_koreatown}}
    \hspace{2mm}
    \subfloat[]{\includegraphics[scale=.33]{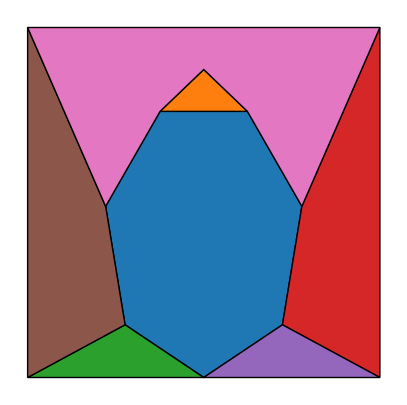}\label{fig:K_koreatown}}
    \caption{(a) A geographical set $S$ that consists of the neighborhood Koreatown and its neighbors{, as given by a {\sc shapefile} \cite{lashp}.} {Observe} that the neighborhood Little Bangladesh has only two neighbors and that the intersection between Koreatown and Wilshire Center has two components. (b) {The result of gluing Koreatown's polygon to the polygons of its neighbors.}}
    \label{fig:koreatown}
\end{figure}


\subsection{Triangulating the Polygons with Holes}

{We triangulate each {polygon with holes} $P^R$ {using} the inductive algorithm in \cite{artgallery}. We show examples of triangulated polygons with holes in Figure \ref{fig:triangulate}. The result of this triangulation process is a 2D \scom\ $\K$ with property~\ref{scom_cond}. (We assign a 2D simplex in {the} {polygon with holes} $P^R$ to the geographical region $R$.) The \scom\ $\K$ is a minimal \scom\ with property \ref{scom_cond} because (1) each {polygon with holes} $P^R$ has the minimum number of vertices and holes and (2) the number of triangles in the triangulation of $P^R$ is determined by its number of vertices and its number of holes by Euler's theorem (see \cite{artgallery}). For an example, see Figure \ref{fig:K_granada}.}

\begin{figure}
    \centering
    \subfloat[]{\includegraphics[width = .25\textwidth]{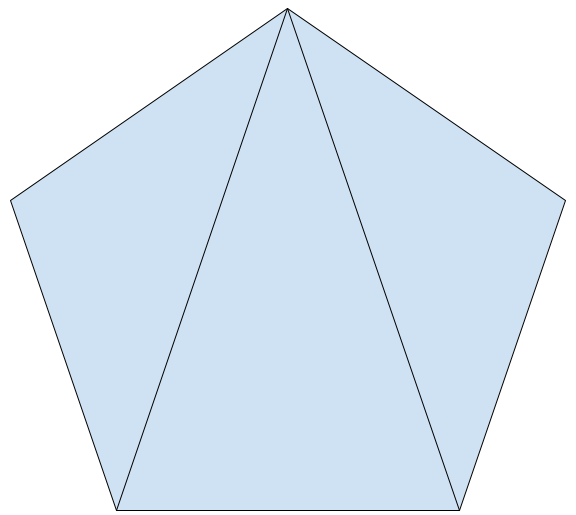}\label{fig:genus0}}
    \subfloat[]{\includegraphics[width = .25\textwidth]{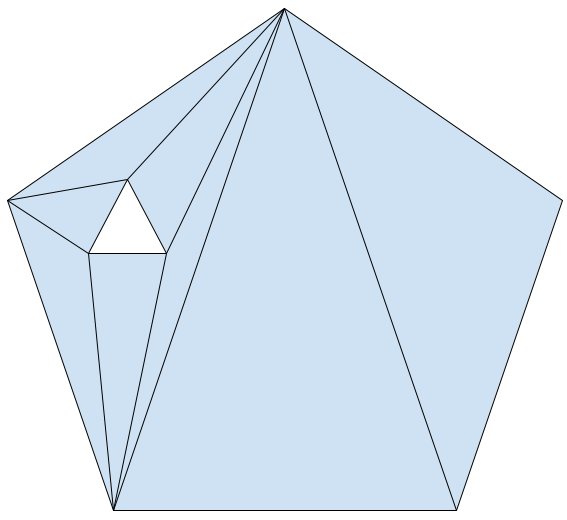}}
    \subfloat[]{\includegraphics[width = .25\textwidth]{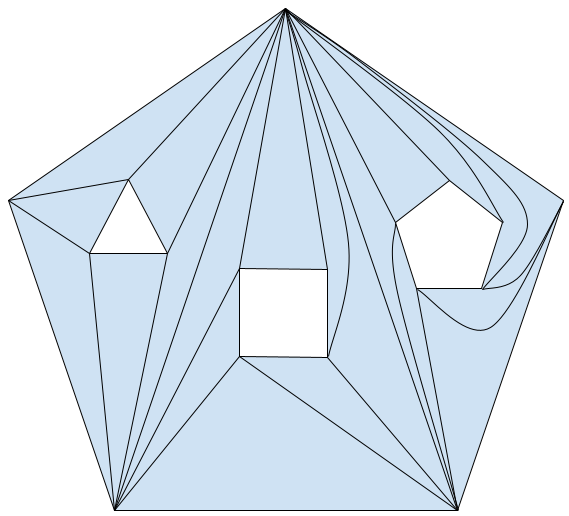}}
    \subfloat[]{\includegraphics[width=.25\textwidth]{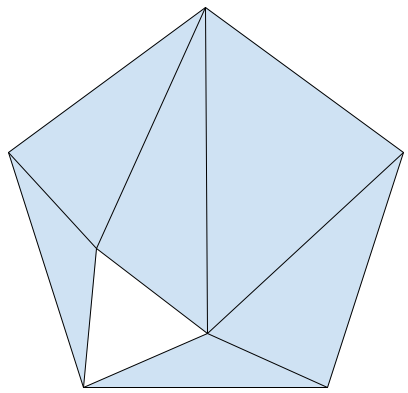}}
    \caption{Triangulation of a {polygon with holes} $P^R$ for a region $R$ when (a) $R$ has no {holes}, (b) $R$ has a single {hole}, (c) $R$ has multiple {holes},{ and (d) $R$ has a hole that touches the {exterior} boundary of $R$.}
    }
    \label{fig:triangulate}
\end{figure}


\section{Our Filtration Functions}\label{sec:filtration}

We define various filtrations that one can use with the simplicial complex $\K$ that we constructed in section \ref{scom_construction}, and we discuss how to interpret the resulting PDs and vineyards. Let $S$ be the set of geographical regions $R$ that the \scom\ $\K$ represents, and let $F: S \to \mathbb{R}$ be a real-valued function on $S$. For example, in section~\ref{sec:nyc}, $F(R)$ is the per capita {full-vaccination} rate (i.e., having received all required doses of some vaccine) for COVID-19 in NYC zip code $R$. {In sections~\ref{sec:sublevel} and \ref{sec:superlevel}, we define two} filtration functions that are induced by $F$. Given a time-dependent and real-valued function $F(t, R)$, we define time-dependent filtration functions in section~\ref{sec:timefilt}. For example, in section~\ref{sec:LA}, $F(t, R)$ is the 14-day mean per capita COVID-19 case rate in neighborhood $R$ on day $t$. From a time-dependent filtration function, we compute a vineyard.


\subsection{The Sublevel-Set Filtration}\label{sec:sublevel}

In this subsection, we define a sublevel-set filtration. In our applications, we use the 1D PH of the sublevel-set filtration to analyze local maxima in our data sets. We illustrate the idea of a sublevel-set filtration in Figure~\ref{fig:sublevel_ex}.

\begin{figure}
    \centering
    \subfloat[]{\includegraphics[width = .2\textwidth]{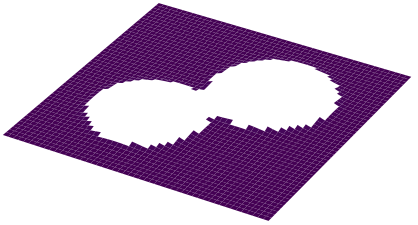}}
    \subfloat[]{\includegraphics[width = .2\textwidth]{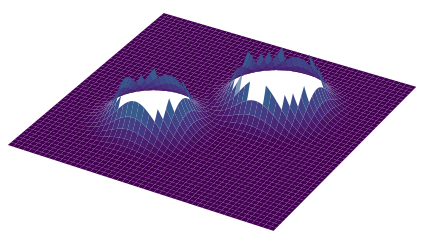}}
    \subfloat[]{\includegraphics[width = .2\textwidth]{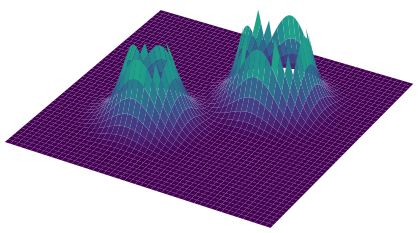}}
    \subfloat[]{\includegraphics[width = .2\textwidth]{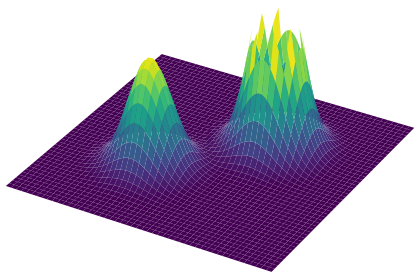}}
    \subfloat[]{\includegraphics[width = .2\textwidth]{sublevel/sublevel_sep0.png}\label{subfig:sublevel_separated}} \\
    \subfloat[]{\includegraphics[width = .2\textwidth]{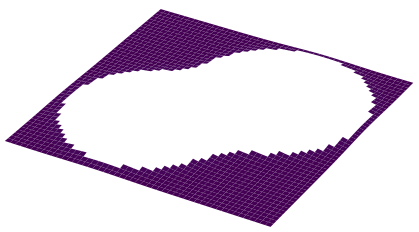}}
    \subfloat[]{\includegraphics[width = .2\textwidth]{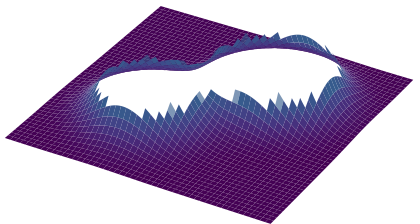}}
    \subfloat[]{\includegraphics[width = .2\textwidth]{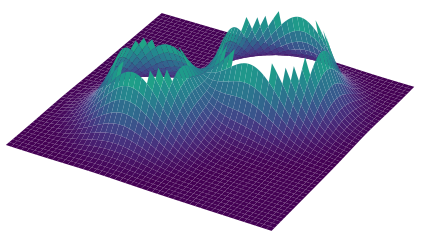}}
    \subfloat[]{\includegraphics[width = .2\textwidth]{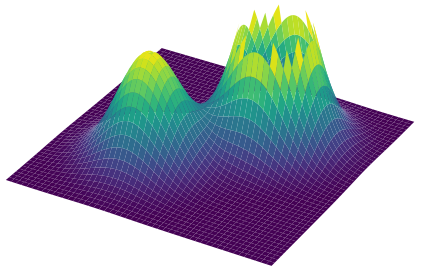}}
    \subfloat[]{\includegraphics[width = .2\textwidth]{sublevel/sublevel_nonsep0.png}\label{subfig:sublevel_nonseparated}}
    \caption{In panels (a)--(e), we show the $\alpha$-sublevel sets for increasing $\alpha$ of a function $f: \mathbb{R}^2 \to \mathbb{R}$ that has two well-separated local maxima. In (a), {for the smallest value of $\alpha$,} there is one hole that corresponds to the global maximum. In (b), a second hole appears; it corresponds to the other local maximum. In (d), the second hole is filled in. In (e), the first hole is filled in. In panels (f)--(j), we show the $\alpha$-sublevel sets for increasing $\alpha$ of a function $g: \mathbb{R}^2 \to \mathbb{R}$ whose two local maxima have the same locations and values as $f$, but are {not well-separated} from each other. {The} second hole does not appear until the sublevel set in panel (h). In all panels, the jagged edges are artifacts of the way that the Python package {\sc matplotlib} plots surfaces.}
    \label{fig:sublevel_ex}
\end{figure}

\begin{definition}[Sublevel-set filtration]\label{def:sublevel}
Let $\K$ be the \scom\ {that we obtain from our} construction in section \ref{scom_construction} for a set $S$ of regions, and let $g$ be the assignment of 2D simplices to {the} regions. Let $F: S \to \mathbb{R}$. We define the sublevel-set filtration function $f$ by considering the sublevel sets of $F$. On the 2D simplices, we define the filtration function by 
\begin{equation*}
	f(\sigma)= F(g(\sigma))\,. 
\end{equation*}
We extend the filtration function to the remaining {(lower-dimensional)} simplices {as follows. If $\sigma$ is a vertex or edge on the boundary of $\K$}, we set
\begin{equation*}
	f(\sigma) = \min_R F(R).
\end{equation*}
{Otherwise, we set}
\begin{equation}
	f(\sigma) = \min\{ f(\tilde{\sigma}) \mid {\tilde{\sigma} \text{ is a 2D simplex for which } \sigma \text{ is a vertex or edge of } \tilde{\sigma}} \}.
\end{equation}
\end{definition}

At filtration level $\alpha$, the \scom\ $\K_{\alpha}$ is the {simplicial subcomplex} of $\K$ that is induced by the union of the set of 2D simplices $\sigma$ such that $F(g(\sigma)) \leq \alpha$ and the set of vertices and edges that are on the boundary of $\K$. {Henceforth, we say that {the vertices and edges on the boundary of $\K$} are ``exterior-adjacent''.} By construction, {the underlying space of} $\K_{\alpha}$ is homeomorphic to the union of {all} regions $R$ such that $F(R) \leq \alpha$ {and} the exterior boundary. We set $f(\sigma)= \min_R F(R)$ for exterior-adjacent vertices and edges $\sigma$ for technical reasons that we will explain in a few paragraphs. In appendix \ref{sec:exterior}, we explore an alternative definition in which we set the filtration values of exterior-adjacent vertices and edges $\sigma$ to $\min_R \{ F(R) \mid R \subset C\}$, where $C$ is the connected component that contains $\sigma$.

The 1D PH of the sublevel-set filtration encodes information about the structure of the local maxima of $F$. A region $R$ {of a geographical space} is a {local maximum} if the value of $F(R)$ is larger than the value of $F(N)$ for all neighboring regions $N$ of $R$ for which $N \cap R$ is 1D. {More generally, we consider a set $E\subseteq S$ of regions {(where $|E| = 1$ is possible)} to be a \emph{local maximum} if
\begin{enumerate}
    \item the interior of $\bigcup_{R \in E}E$ is connected,
    \item the value of $F$ is constant on $E$ (we denote this value by $F(E)$), and
    \item the value of $F(E)$ is larger than the value of $F(N)$ for all regions $N \not\in E$ such that $N \cap R$ is 1D for some $R \in E$.
\end{enumerate}}
If $E$ is a local maximum, there is a 1D homology class whose death simplex is one of the simplices in the preimage $g^{-1}(E)${, where $g$ is the map from $2D$ simplices in $\K$ to geographical regions in $S$.} The class dies at filtration level $\alpha = F(E)$. For example, if $F(R)$ is the {COVID-19 case rate} in region $R$, then {the} 1D homology classes correspond to COVID-19 anomalies and the death simplex of a 1D homology class {indicates} the epicenter of that anomaly. The larger the value of $F(E)$ in comparison to {nearby} regions (including regions that {are} not necessarily immediate neighbors), the more persistent the homology class is. If the union of all regions (excluding the exterior region) is not simply connected, then there is at least one 1D homology class with an infinite death time. See Figure~\ref{fig:LAdata} for an example. The infinite 1D homology classes correspond to the holes in the {geographical space}, rather than to local maxima. The local maxima of $F$ are in one-to-one correspondence with the set of 1D homology classes with finite death times\footnote{Recall that in our definition of a local maximum, we only compare the value of a region $R$ {(or the constant value of a set {$E$ of regions})} to the {values of} neighbors $N$ that have a 1D intersection with $R$ {(or with a region in $E$)}. It is possible for two local maxima, $R_1$ and $R_2$, to have a 0D intersection. A local neighborhood of $R_1 \cup R_2$ without the union $R_1 \cup R_2$ itself is homotopy-equivalent to a figure-8, which has two 1D homology generators. One of them corresponds to $R_1$, and {the} other one corresponds to $R_2$.}. There is a canonical mapping from finite 1D homology classes to regions. A class that is represented by {the} simplex pair $(\sigma_b, \sigma_d)$ is mapped to the region $g(\sigma_d)$ that includes $\sigma_d$. The region $g(\sigma_d)$ is the location of the local maximum of $F$ that corresponds to the homology class\footnote{Let $E \subseteq S$ be the local maximum that corresponds to the 1D homology class. If $E = \{R\}$, then $g(\sigma_d) =R$. However, if $E$ contains multiple regions, then $g(\sigma_d)$ is only one of the regions {in} $E$.}, and the death simplex's filtration value $f(\sigma_d)$ is the value of the local maximum. The death simplices of the finite 1D homology classes and their filtration values give the local-maximum locations $R$ and their function values $F(R)$.

{With the 1D PH, we can do}
more than simply identify local maxima and their locations; {the 1D PH} also reveals information about {relationships {between} the local maxima.} If the local maxima are well-separated from one another, then the corresponding homology classes all have early birth times. {For example, the NYC data set has several connected components. One} can think of the global {maximum} of each connected {component as} ``totally separated'' from each other because they are on different connected components. The corresponding 1D homology classes are all born at the earliest possible filtration time, which is $\min_R F(R)$ (see Figure~\ref{fig:nyc_sublevel}). We show an example of well-separated local maxima in Figure \ref{subfig:separation}. By contrast, the two local maxima in Figure \ref{subfig:nonseparation} are not well-separated, so the homology class that corresponds to the lower peak in Figure \ref{subfig:nonseparation} is born at a {larger} filtration value than the homology class in Figure \ref{subfig:separation}. {See Figure \ref{fig:sublevel_ex} for a visualization of the {sublevel sets}.} The birth times of the 1D homology classes reflect structural information about the local maxima.

We set the filtration value of exterior-adjacent vertices and edges to {the global minimum} $\min_R F(R)$ so that 1D PH can detect local maxima on the boundary of a {geographical space}. {(We consider an alternative approach in appendix \ref{sec:exterior}.)} This is important for the LA data set of COVID-19 case rates. In Figure~\ref{fig:la_vy_map}, we observe that many of the {most-persistent} COVID-19 anomalies are on the boundary of the {geographical space}{; it} is crucial that we are able to detect them. If we had {not defined the exterior-adjacent filtration values {in} this way, then} the filtration value of exterior-adjacent vertices and edges $\sigma$ would {be} $F(R)$, where $R$ is the unique region that is adjacent to $\sigma$. If $R$ is a local maximum, its corresponding 1D homology class {is born and dies} at filtration level $\alpha = F(R)$. In the PD, {it then appears} as a point on the diagonal. Therefore, for 1D PH to detect local maxima on the boundary of a {geographical space}, we must adjust the filtration values of exterior-adjacent vertices and edges.

The 0D homology classes correspond to local minima of $F$. However, unlike for the 1D homology classes, there is not a natural mapping from {0D} homology classes to the locations of the minima. In appendix \ref{sec:0D}, we discuss the interpretation and computation of 0D homology classes.


\subsection{The Superlevel-Set Filtration}\label{sec:superlevel}

An alternative to using the sublevel-set filtration from section~\ref{sec:sublevel} is {to use superlevel sets of $F$} to construct a superlevel-set filtration. In our {case study on COVID-19 vaccination rates in NYC}, we use {a} superlevel-set filtration to analyze local minima {of} {the vaccination rate}. {We define a \emph{local minimum} analogously to the way {that} we defined a local maximum in section \ref{sec:sublevel}.} We illustrate the idea of the superlevel-set filtration in Figure~\ref{fig:superlevel_ex}.

\begin{figure}
    \centering
    \includegraphics[width = .24\textwidth, valign = t]{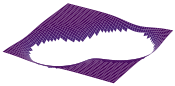}
    \includegraphics[width = .24\textwidth, valign = t]{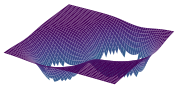}
    \includegraphics[width = .24\textwidth, valign = t]{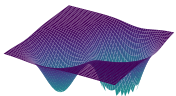}
    \includegraphics[width = .24\textwidth, valign = t]{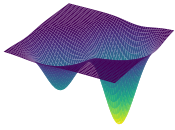}
    \caption{The $\alpha$-superlevel sets, with $\alpha$ decreasing from left to right, for the graph of a function $f: \mathbb{R}^2 \to \mathbb{R}$ with two local minima.}
    \label{fig:superlevel_ex}
\end{figure}

\begin{definition}[Superlevel-Set Filtration]\label{def:superlevel}

Let $F: S \to \mathbb{R}$ for a set $S$ of regions. The superlevel-set filtration function $f$ is the sublevel-set filtration function that is induced by $-F$.
\end{definition}

At filtration level $-\alpha$, the simplicial complex $\K_{-\alpha}$ is the {simplicial subcomplex} of $\K$ that is induced by the union of {the set of exterior-adjacent simplices and} the set of 2D simplices $\sigma$ for which $F(g(\sigma)) \geq \alpha$. By construction, {the underlying space of} $\K_{-\alpha}$ is homeomorphic to the union of regions $R$ for which $F(R) \geq \alpha$ {along with the exterior boundary}. Local maxima of $F$ now correspond to 0D homology classes, and local minima of $F$ now correspond to 1D homology classes; this is the opposite situation from the sublevel-set filtration. Our discussion of local maxima for the sublevel-set filtration in section \ref{sec:sublevel} applies to local minima for the superlevel-set filtration, and our discussion of local minima for the sublevel-set filtration in section \ref{sec:sublevel} applies to local maxima for the superlevel-set filtration. The only difference is that the filtration values in the superlevel-set filtration are the additive inverses of the function values of $F$. This implies, for example, that the death filtration value of a 1D homology class that corresponds to a local minimum at region $R$ is $\alpha = -F(R)$, rather than $\alpha = F(R)$.


\subsection{A Time-Dependent Filtration}\label{sec:timefilt}

Suppose that we have a time-dependent, real-valued function $F(t, R)$ whose domain is $\{t_0, t_1, \ldots, t_n\} \times S$, where $t_0 \in \mathbb{R}$ is the initial time and $t_n \in \mathbb{R}$ is the final time. For example, in section~\ref{sec:LA}, the value of $F(t, R)$ is the 14-day mean per capita COVID-19 case rate in Los Angeles on day $t$. We seek to analyze the structure of local extrema as they change {with} time.

\begin{definition}[Time-Dependent Sublevel-Set Filtration]
Let $F:\{t_0, t_1, \ldots, t_n\} \times S \to \mathbb{R}$ be a time-dependent function on a set $S$ of regions{, and let $\K$ be the \scom\ {for $S$} from the construction in section \ref{scom_construction}.} At each time ${t_i} \in \{t_0, t_1, \ldots, t_n\}$, we define the time-dependent filtration function $f({t_i}, \cdot)$ to be the sublevel-set filtration that is induced by $F({t_i}, \cdot)$. To extend this filtration function to the entire interval $[t_0, t_n]$, we linearly interpolate {$f(\cdot, \sigma)$ on each subinterval $[t_i, t_{i+1}]$ {for} all simplices $\sigma \in \K$.}
\end{definition}

In the present paper, we only use the time-dependent sublevel-set filtration, but one can analogously define a time-dependent superlevel-set filtration. We have implemented both of these filtrations in our code.

We use a time-dependent sublevel-set filtration to construct a vineyard. This allows us to track how the extrema move in both space and time. As in section~\ref{sec:sublevel}, each finite vine corresponds to a local maximum whose location at time $t$ is given by the region $g(\sigma_d(t))$ that contains the vine's time-dependent death simplex $\sigma_d(t)$\footnote{It is known that vineyards are not stable \cite{crocker}. A small perturbation in filtration values can cause crossing of vines that previously did not cross (i.e., it is an ``avoided crossing''). This, in turn, causes simplex pairings to change. Therefore, the geographical region $g(\sigma_d(t))$ that corresponds to a particular vine at time $t$ is sensitive to small perturbations in filtration values.
}.
The length of a vine corresponds to its persistence in time.


\section{Case Studies}\label{sec:apps}

We now apply our methods to two data sets, which we {illustrate} in Figure~\ref{fig:datasets}.

\begin{figure}
    \centering
    \subfloat[NYC zip codes]{\includegraphics[height = 4.4cm]{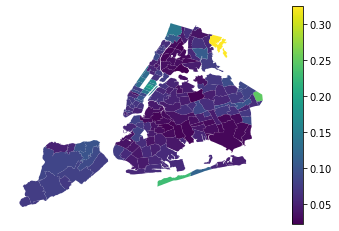}\label{fig:NYCdata}} 
    \subfloat[LA neighborhoods]{\includegraphics[height = 4.4cm]{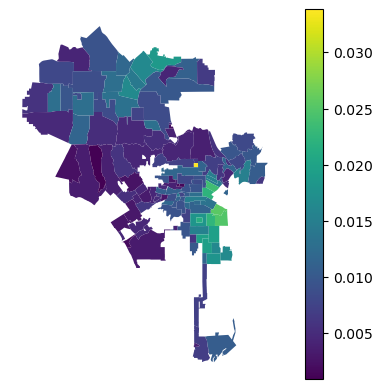}\label{fig:LAdata}}
    \caption{We show
    (a) {the} per capita {COVID-19 full vaccination} rate in New York City (NYC) by (modified) zip code on 23 February 2021 and
    (b) {the} 14-day mean per capita COVID-19 case rate in {the city of} Los Angeles (LA) by neighborhood on 30 June 2020. In {both} {(a) and (b)}, the white regions are geographical regions that do not belong to the depicted city.
    }
    \label{fig:datasets}
\end{figure}


\subsection{COVID-19 Vaccination Rates in New York City}\label{sec:nyc}

We examine vaccination rates in (modified) zip codes of NYC\footnote{{The NYC Department of Health \& Mental Hygiene uses modified zip-code tabulation areas (MODZTCA) for} COVID-19 data \cite{NYCshp}. In these modified zip codes, some zip codes with small populations are combined \cite{modzcta}. We henceforth refer to modified zip codes as simply ``zip codes''.}. We demonstrate the effects of the two filtrations that we defined in section \ref{sec:filtration}. The geographical boundaries of the zip codes are given by a {\sc shapefile} \cite{NYCshp}. From {the {\sc shapefile}}, we construct a \scom\ $\K$ in the manner that we described in section \ref{scom_construction}. The {vaccination} data set, which we obtained from the NYC Department of Health \& Mental Hygiene website \cite{NYCvax}, consists of the number of fully vaccinated people in each zip code on 23 February 2021\footnote{{At the time, the} NYC Department of Health \& Mental Hygiene {defined} ``fully vaccinated'' people to be individuals who {either had} received both doses of the Pfizer or Moderna vaccine or {had received} one dose of the Johnson \& Johnson vaccine. (This differs from common parlance at that time, in which people were sometimes {considered to be} ``fully vaccinated'' only after two weeks had passed since their final dose of a vaccine.)}. For each zip code, we divide this number by its population estimate in \cite{NYCvax} to obtain a per capita vaccination rate. For zip code $R$, we define $F(R)$ to be the per capita vaccination rate in $R$ on 23 February 2021.

We do not possess the {daily} vaccination-rate data that is necessary to compute a vineyard, so instead we calculate the PH of $\K$ with the {sublevel-set} and superlevel-set filtrations from sections~\ref{sec:sublevel} and~\ref{sec:superlevel}. We show the resulting PDs for the 1D PH in Figure~\ref{fig:nycPD}. As we described in section~\ref{sec:sublevel}, the points in the PD {from the sublevel-set filtration} correspond to zip codes in which vaccination rates are higher than in the {neighboring} zip codes. The death filtration level of a homology class is the vaccination rate in that zip code, and the birth filtration level of a homology class reflects the extent of spatial isolation of that zip code from other local {maxima. An} earlier birth filtration implies more spatial isolation. Similarly, the points in the superlevel-set filtration {PD} correspond to zip codes in which the vaccination rate is lower than in {the neighboring} areas. As we discussed in section~\ref{sec:sublevel}, we obtain the zip code {that is associated with a} homology class from its death simplex $\sigma_d$. We color {the} points in the PDs by {the boroughs of their corresponding zip codes}.

\begin{figure}
    \centering
    \subfloat[Sublevel-set filtration]{\includegraphics[width = .5\textwidth]{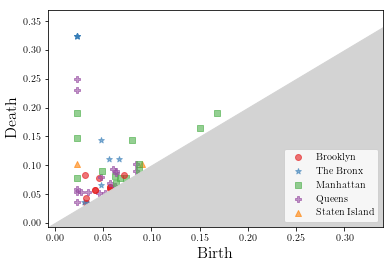}\label{fig:nyc_sublevel}}
    \subfloat[Superlevel-set filtration]{\includegraphics[width = .5\textwidth]{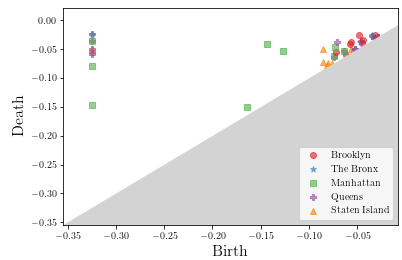}\label{fig:nyc_superlevel}} 
    \caption{PDs for the 1D PH of the NYC simplicial complex with filtrations that are induced by the per capita full vaccination rate by zip code on 23 February 2021. We show only the finite homology classes. Each point in a PD corresponds to a zip code, which we label according to its borough \cite{boroughs}, that has (a) a higher vaccination rate than its neighboring zip codes or (b) a lower vaccination rate than its neighboring zip codes.}
    \label{fig:nycPD}
\end{figure}

In Figures \ref{fig:nyc_maxima} and \ref{fig:nyc_minima}, we highlight the locations of the maxima and minima, respectively. In Figures \ref{subfig:maxima_death} and \ref{subfig:minima_death}, {we color the extrema {based on their vaccination rates}}. In Figure \ref{subfig:minima_death}, we observe that the minima all have {near-$0$} vaccination rates. In Figures \ref{subfig:maxima_pers} and \ref{subfig:minima_pers}, {we color {each zip code} according to the} persistence {(i.e., {the value} $\text{death} - \text{birth}$)} of {its} corresponding homology class. These two figures incorporate global information about the structure of the extrema, as {we} described {in the paragraph above and in section \ref{sec:filtration}}. For example, in Figure \ref{subfig:minima_pers}, we observe that some of the minima ({specifically,} those with {the} {largest values of persistence}) are significantly more spatially separated than others, even though all {of} the minima have similar vaccination rates. {A larger} persistence of a local minimum indicates a greater difference in the vaccination rate between the minimum and the {neighboring} zip codes. A zip code that is a local minimum with {a larger} persistence may {have a} greater inequity in vaccine {access than {its}} neighboring regions. {Such insights may be useful for sociologists and policy makers.}

\begin{figure}
    \centering
    \subfloat[]{\includegraphics[width=.4\textwidth]{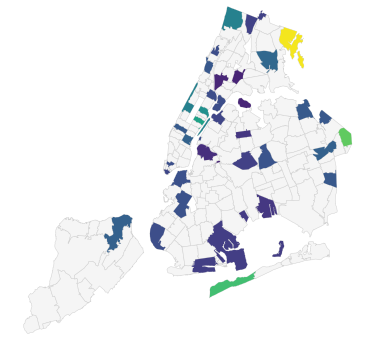}\label{subfig:maxima_death}}
    \subfloat[]{\includegraphics[width=.4\textwidth]{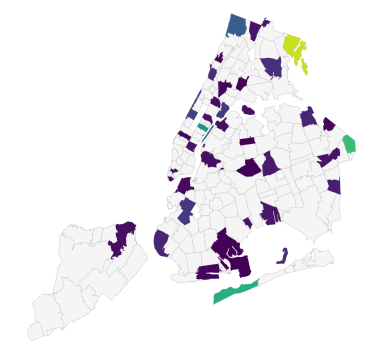}\label{subfig:maxima_pers}}
    \hspace{3mm}
    \subfloat{\includegraphics[scale=.3]{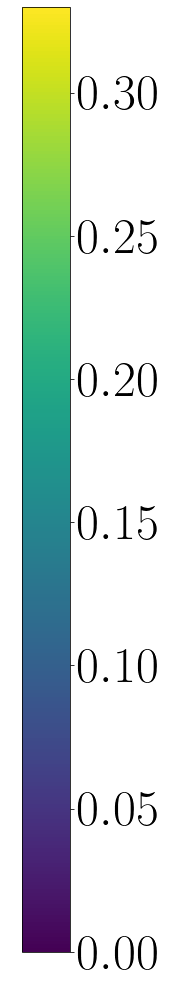}}
    \caption{Maps of the local maxima of the NYC vaccination-rate function. (a) Color corresponds to the vaccination rate of {a} zip code. (b) Color corresponds to the persistence (i.e., $\text{death} - \text{birth}$) of the corresponding homology class.}
    \label{fig:nyc_maxima}
\end{figure}

\begin{figure}
    \centering
    \subfloat[]{\includegraphics[width=.4\textwidth]{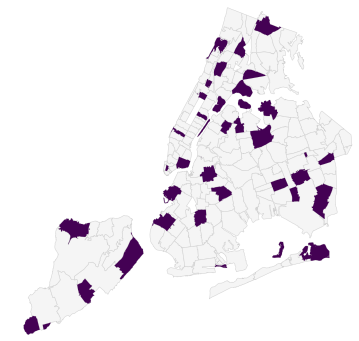}\label{subfig:minima_death}}
    \subfloat[]{\includegraphics[width=.4\textwidth]{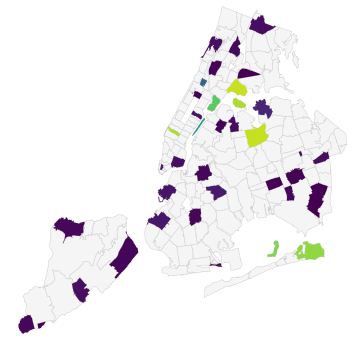}\label{subfig:minima_pers}}
    \hspace{3mm}
    \subfloat{\includegraphics[scale=.3]{nyc/colorbar.png}}
    \caption{Maps of the local minima of the NYC vaccination-rate function. (a) Color corresponds to the vaccination rate of {a} zip code. (b) Color corresponds to the persistence (i.e., $\text{death} - \text{birth}$) of the corresponding homology class.}
    \label{fig:nyc_minima}
\end{figure}

An issue arises from the fact that several of the NYC zip codes are islands and {thus are isolated}. These islands are trivial extrema because they are not adjacent to any other zip codes. One may wish to exclude these trivial extrema from {a} PD. In appendix \ref{sec:exterior}, we propose alternative methods for handling disconnected geographical spaces such as NYC.

One can use the PDs in Figure \ref{fig:nycPD} to study inequities in vaccine access. For example, one may seek to discern patterns in demographic data that correspond to the {most-persistent} points in the PDs. For {interested readers, we provide} some demographic data in appendix \ref{sec:demographic}.


\subsection{COVID-19 Case Rate in the City of Los Angeles}\label{sec:LA}

We now examine {time-dependent} COVID-19 case rates in neighborhoods of the city of Los Angeles (LA)\footnote{We exclude Angeles National Forest because it {has only} 20 inhabitants.}. The geographical boundaries of the neighborhoods are given by a {\sc shapefile} \cite{lashp}. From the {\sc shapefile}, we construct a simplicial complex $\K$ in the manner that we described in section~\ref{scom_construction}. We also know the number of cases in each neighborhood on each day from 25 April 2020 {to} 25 April 2021. For each neighborhood, we divide the case count by the neighborhood population to obtain per capita case rates, and we calculate a running 14-day mean\footnote{On day $t$, we take the mean of the case rates on days $t$, $t-1$, \ldots, $t-13$. Some outlets (e.g., \cite{stat_covidtracker}) report running 14-day means of COVID-19 case counts, and other outlets (e.g., \cite{nytimes_covidtracker}) report 14-day trends.} on each day to smooth the data. For neighborhood $R$ and time $t \in \{0, {1,} \ldots, 365\}$, we define $F(t, R)$ to be the 14-day mean per capita case rate in $R$ on day $t$ after $25$ April 2020. We compute the vineyard for a simplicial complex $\K$ using the time-dependent sublevel-set filtration that is induced by $F(t, R)$. {We show the most important and interesting subsets of our vineyard in Figures \ref{fig:la_vineyards_new} and \ref{fig:la_vy_may}. See Figure \ref{fig:la_vy_full} for the full vineyard.}

\begin{figure}
    \centering
    \subfloat[]{\includegraphics[height = .33\paperheight]{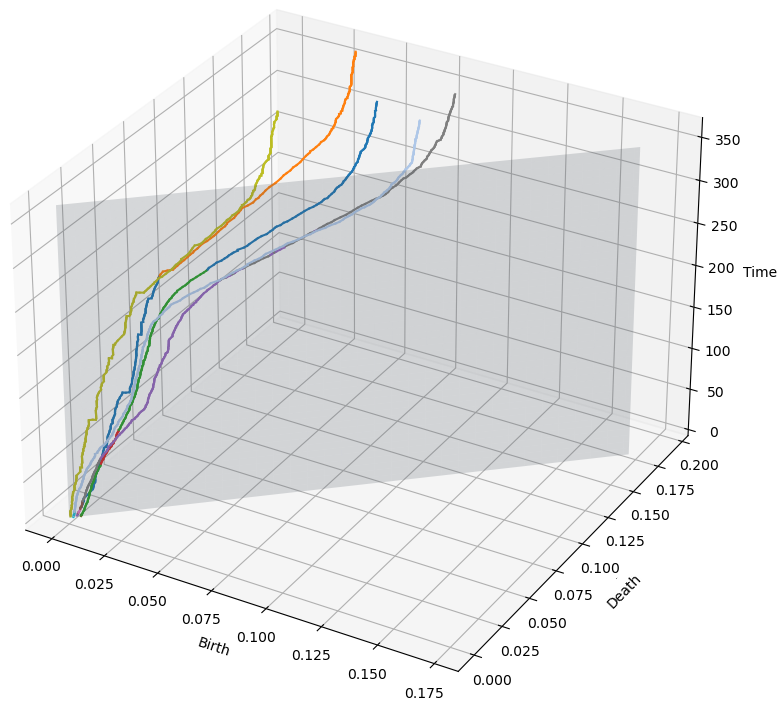}}\\
    \subfloat[]{\includegraphics[height=.33\paperheight]{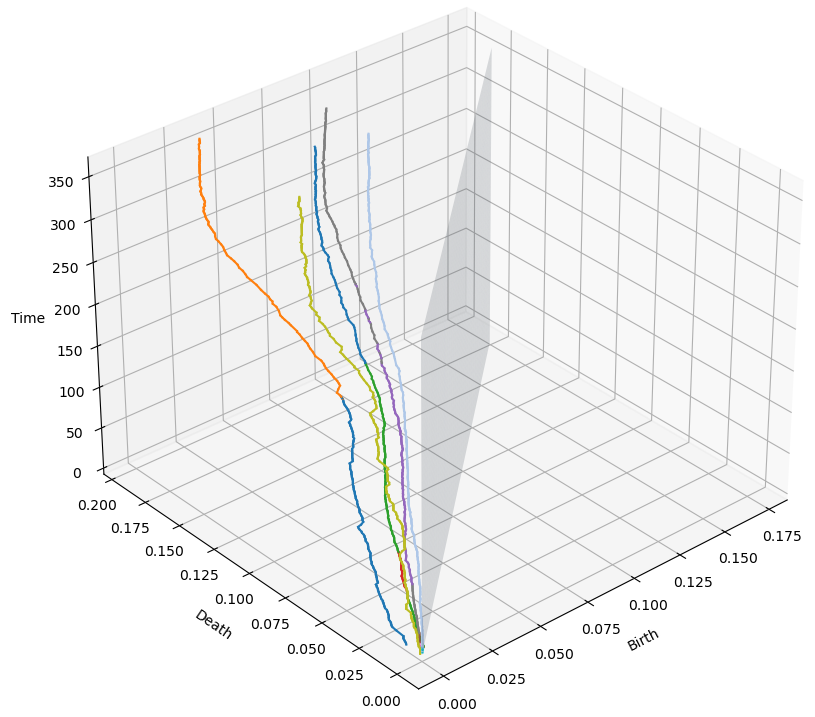}}
    \caption{(a) The five most-persistent vines of the vineyard for the LA \scom\ with a sublevel-set filtration from the 14-day mean per capita case rate during the period 25 April 2020--25 April 2021. (See Figure \ref{fig:la_vy_full} for the full vineyard.) Each vine {corresponds to} a COVID-19 anomaly. We color each vine according to the geographical locations of its associated anomaly. Because the geographical location of an anomaly can change {with} time, a single vine can have multiple colors. {(See Figure~\ref{fig:la_vineyards_new_legend} for the legend.)} (b) A different view of the same five vines.}
    \label{fig:la_vineyards_new}
\end{figure}

\begin{figure}
    \centering
    \includegraphics[width = \textwidth]{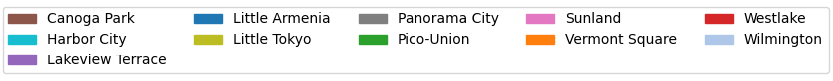}
    \caption{The legend for Figure~\ref{fig:la_vineyards_new}. Each of the depicted regions is a local maximum of the COVID-19 {case-rate function} for some subset of the time period 25 April 2020--25 April 2021.}
    \label{fig:la_vineyards_new_legend}
\end{figure}

The vines in the vineyard correspond to COVID-19 anomalies, which we define to be neighborhoods that have a higher running 14-day mean COVID-19 case rate than the surrounding neighborhoods for at least one day. Anomalies that are more spatially isolated yield vines with {earlier} birth-filtration levels, and anomalies with high case rates yield vines with late death-filtration levels. See section~\ref{sec:sublevel} for a detailed discussion. We color each vine according to the geographical location(s) of its anomaly. As we discussed in section~\ref{sec:timefilt}, we obtain the anomaly location(s) from the time-dependent death simplex $\sigma_d(t)$ of a vine. The function $\sigma_d(t)$ is a piecewise-constant function; as it changes, so does the location of the associated anomaly. Therefore, the color of a vine can change {with} time. For example, consider Figure~\ref{fig:la_vineyards_new}, where we show the five most-persistent vines\footnote{We {defined} the {persistence} of a vine in section \ref{sec:vineyard}.}. The global maximum of the data set is initially in Little Armenia, but it moves to Vermont Square at about $t = 220$. In the vineyard, we see this from the vine that is initially blue (for Little Armenia) from time $t = 0$ until about $t = 220$ and then orange (for Vermont Square) starting from about time $t = 220$ through time $t = 365$. There are also other vines whose locations change {with} time. Such geographical location changes do not need to be adjacent, but they often are near each other. In Figure~\ref{fig:la_vy_map}, we highlight these anomalies on a map.

\begin{figure}
    \centering
    \includegraphics[scale=.5]{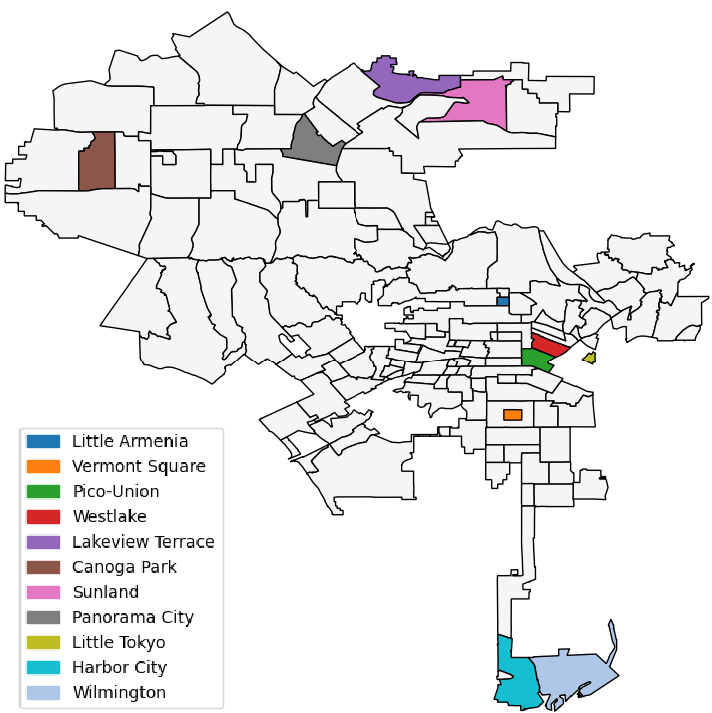}
\caption{A map of the {most-persistent} anomalies of the COVID case-rate function in LA during the time period 25 April 2020--25 April 2021. Each of the highlighted regions is a local maximum of the COVID-19 case-rate function for some subset of the time period.}
    \label{fig:la_vy_map}
\end{figure}

A vineyard encodes the temporal persistence of anomalies. The length of time that a vine is not on the diagonal plane of a vineyard, which we henceforth call the ``length'' of a vine, is the amount of time that an anomaly exists {in the vineyard}. At the beginning of the COVID-19 pandemic, all neighborhoods had low per capita case rates. We expect {an emerging anomaly} to have a low case rate for a long time and then for the case rate to grow rapidly starting at some later time. An emerging anomaly in the {``low-case-rate''} phase yields a vine that is close to the diagonal for a long time. By examining the lengths of vines, we hypothesize that one can distinguish between concerning emerging anomalies (i.e., those that may become major COVID-19 anomalies in the future) and anomalies of lesser concern, even when the anomalies have similar case rates.

In Figure~\ref{fig:la_vy_may}, we show case rates early in the time period that we track {(and close to the ``beginning''\footnote{The COVID-19 pandemic was declared a national emergency on 13 March 2020 \cite{national_emergency}, and the city of LA closed its public schools and ordered the closure of restaurants, bars, and gyms on 16 March 2020 \cite{LA_closures}.} of the COVID-19 pandemic) by computing the vineyard for the period 25 April 2020--25 May 2020.} In {the depicted} vineyard, we exclude the twenty most-persistent vines to more easily {see} the vines that are close to the diagonal plane. Many of these latter vines are short, so their associated anomalies are short-lived. The longer vines are anomalies that are longer-lived and thus of greater concern in the long run, even though they are close to the diagonal during the period 25 April 2020--25 May 2020. For example, there is an anomaly at Wilmington that we show with the light-blue vine. This vine is close to the diagonal plane, but it has {a} {large} temporal persistence during the period 25 April 2020--25 May 2020. In Figure~\ref{fig:la_vineyards_new}, we see that Wilmington eventually becomes one of the {most-persistent anomalies} in LA.

\begin{figure}
    \centering
    \subfloat[]{\includegraphics[height = .345\paperheight]{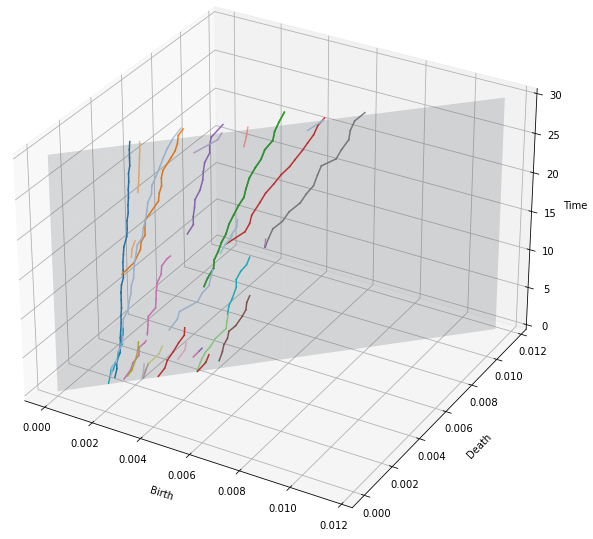}}\\
    \subfloat[]{\includegraphics[height=.345\paperheight]{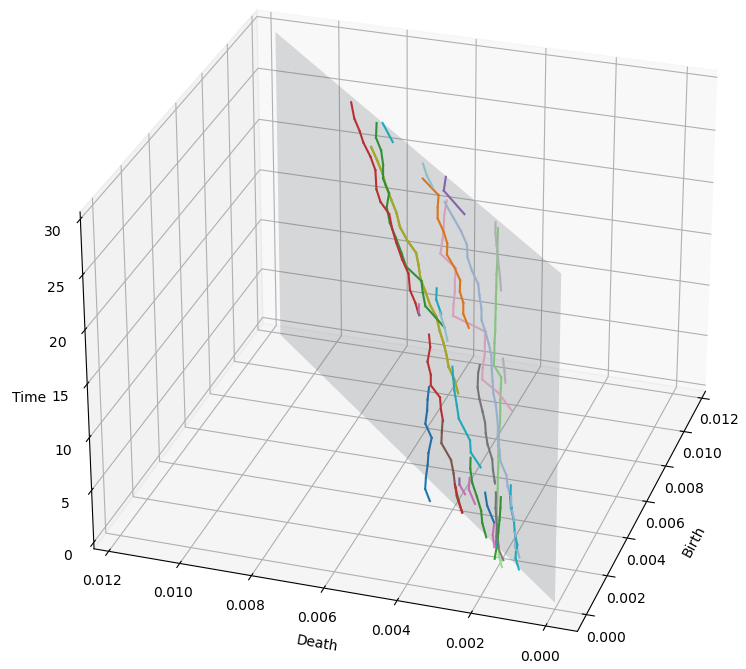}}
    \caption{{(a)} Vineyard for the LA \scom\ with a sublevel-set filtration {for the} 14-day mean per capita case rate during the period 25 April 2020--25 May 2020. We exclude the 20 most-persistent vines to more easily see the vines {that are} near the diagonal plane. Each vine {is associated with} a COVID-19 anomaly{, and} we color each vine according to the geographical location(s) of its anomaly. {See Figure~\ref{fig:may_vy_legend} for the legend.} {(b) A different view of the same set of vines.}}
    \label{fig:la_vy_may}
\end{figure}

\begin{figure}
    \centering
    \includegraphics[width = \textwidth]{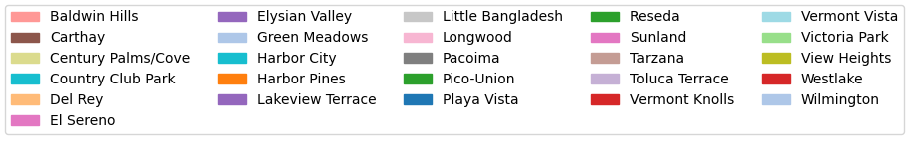}
    \caption{{The legend for Figure~\ref{fig:la_vy_may}. Each of the depicted regions is a local maximum of the COVID-19 {case-rate function} for some subset of the time period 25 April 2020--25 May 2020.}}
    \label{fig:may_vy_legend}
\end{figure}

\section{Discussion}\label{sec:discussion} 

In our approach, we needed to make a variety of choices. There are other ways to construct a simplicial complex to represent a geographical space. There are also other choices of topological tools for analyzing time-varying data. We briefly discuss some of these possibilities in the next several paragraphs.

{If one only cares about local information ({specifically,} the locations and values of the extrema) and not {about} global information ({such as the} spatial separation between extrema), then an alternative method for constructing a \scom\ $\K$ is to construct the dual graph of the set of regions $S$. That is, for each region component $R$, there is a vertex $v_R \in \K_R$, and if regions $R_1$ and $R_2$ are adjacent, then there is an edge between $v_{R_1}$ and $v_{R_2}$. If we wish to study local maxima of a function $F: S \to \mathbb{R}$, then we define the filtration of an edge $e = (v_{R_1}, v_{R_2})$ to be $f(e) = \max\{ F(R_1), F(R_2)\}$ and we define the filtration of a vertex $v_R$ to be $f(v_R) = 0$. (There is an analogous definition for studying local minima.) In the 0D PH of the FSC $(\K, f)$, the homology classes correspond to local maxima. If a homology class's birth simplex is the vertex $v_R$, then $R$ is the corresponding local maximum and $F(R)$ is the death filtration level of the homology class. All the 0D homology classes are born at $0$; thus the birth filtration level does not provide any additional information, as it did for our construction in section \ref{scom_construction}. Because of this, we do not obtain any global information from the $PH$ of $(\K, f)$.
}

Rasterization gives {another} alternative method to construct a simplicial complex from {\sc shapefile} data. When one rasterizes a {\sc shapefile}, one can transform the resulting image into a \scom\ by imposing the pixels of the image onto a triangulation of the plane. However, our approach has several key advantages over rasterization. First, the number of simplices in the \scom\ that one obtains by rasterizing a {\sc shapefile} is orders-of-magnitude larger than the number of simplices in our construction. Computing the PH of a \scom\ with fewer simplices allows significantly faster computations. Second, the \scom\ that one obtains by rasterization has no guarantee of ``topological correctness'', as property~\ref{scom_cond} may not hold. The extent to which the resulting \scom\ is topologically correct depends on the resolution of the rasterization, and using a higher resolution requires more simplices. Our construction of simplicial complexes also yields a natural way to map a 2D simplex to the geographical region that contains it. We use this preservation of geographical information to find the locations of the local extrema. Lastly, our construction allows us to {detect anomalies on the boundary of a geographical space.}

Our construction uses {geographical adjacencies}, but one may instead wish to employ ``effective'' distances between regions. One can calculate effective distances using mobility and transportation data. Two regions that are closely connected via transportation are effectively closer than they are based on direct geographical considerations; this affects phenomena such as the dynamics of infectious diseases \cite{science_eff, india_eff}.

We used only 1D PH to study extrema, but one can alternatively use 0D PH if one is not interested in the geographical locations of the extrema; we discuss this in appendix \ref{sec:0D}. In appendix \ref{sec:exterior}, we discuss alternative filtrations that one can apply to geographical spaces (such as NYC) that are disconnected. We used a time-dependent function on a geographical space to compute vineyards, but an alternative is to use an approach that is based on multiparameter PH. In appendix \ref{sec:multi}, we discuss how one does this when the time-dependent function $F(\cdot, R)$ is monotonic for all regions $R$. When $F(\cdot, R)$ is not monotonic for all $R$, we discuss in appendix \ref{sec:zigzag} how one can use an approach that is based on multiparameter zigzag PH. Both multiparameter PH and multiparameter zigzag PH are difficult to visualize, and they both suffer from a lack of easily interpretable invariants. Consequently, we chose to compute vineyards for our applications.


\section{Conclusions}\label{sec:conclusion}

We developed methods to directly incorporate spatial structure into applications of topological data analysis ({specifically,} of persistent homology) to geospatiotemporal and geospatial data. We defined a way to construct a simplicial complex that efficiently and accurately represents a geographical space. Given a function on a geographical space, we defined filtration functions on a simplicial complex such that the homology classes are in one-to-one correspondence with either local minima or local maxima. By constructing a vineyard, one can track how the local extrema move and change {with} time.

We conducted case studies using COVID-19 vaccination and case-rate data. In one case study, we examined {geospatial} {vaccination-rate} structure in New York City on one day. In our other case study, in which we examined geospatiotemporal data, we constructed a vineyard to {examine} COVID-19 case-rate anomalies in the city of Los Angeles over the course of one year. From the vineyard, we identified the locations of these anomalies and measured the severity of the {associated} disease outbreaks. The vineyard also {captures} information about the relationships between anomalies, such as the extent to which they are isolated from each other. We calculated the temporal persistence of {each} anomaly based on the length of its corresponding vine.

There are several ways to build on our research. It is desirable to discover how to use a vineyard to produce systematic forecasts of how a disease (or something else) will spread in space and time. We hypothesized in section~\ref{sec:LA} that one can identify ``emerging anomalies'' in the COVID-19 data set as vines that are long but close to the diagonal plane. In other applications, one may wish to forecast which locations of local extrema will have the largest data values and/or {the largest} temporal persistences. One may also {wish} to forecast how {extrema} will move in space. It will be valuable to investigate how to use the output of our approach as an input to forecasting models.

Our approach is useful for a wide variety of applications, and it seems possible to generalize it for many others. For example, given spatiotemporal voting data, one can identify regions that vote differently from the neighboring regions. This would allow one to generalize the work of \cite{feng2021} to track the intensity of voting differences and study spatial relationships between different political islands. Our methodology is not restricted to geographical data. {It} is applicable whenever one has a surface that is partitioned into a finite number of regions and a real-valued function (or a sequence of real-valued functions) on those regions. For example, it may be possible to apply our approach to grayscale image data by partitioning an image into regions in which pixel values are close to each other. It also seems possible to extend our approach to higher dimensions; this would require constructing a higher-dimensional simplicial complex when one has adjacency information for the higher-dimensional regions. For example, in three dimensions, one can use such an extension of our approach to study atmospheric, oceanic, and video dynamics.


\appendix
\section{Details of our Simplicial-Complex Construction}\label{sec:details}

\subsection{Boundary-Sequence Adjustment}\label{appendix:nbr_adjustment}

Before constructing the {polygons with holes} $P^R$ for each region $R$, we adjust the boundary sequences if necessary. The adjustment procedure proceeds as follows. Let $D_0^R, D_1^R, \ldots, D_{h_R}^R$ be the disks in the statement of assumption \ref{geo_bdrycond}, let $B_i^R = \partial D_i^R$, and let $S^R_i$ denote the sequences of neighbors around $B_i^R$. First, we adjust the sequences so that, for each region $R$ and each $B_i^R$, the first element of $S_i^R$ has a 1D intersection with $R$. We then adjust the sequences {so} that $|S_i^R| \geq 3$ for all $R$ and $i$. {When} $|S_i^R| < 3$, there are two cases:
\begin{enumerate}
\item {(Case 1)} If $|S_i^R| = 1$, let $N$ be the unique element of $S_i^R$. This situation occurs if $R$ is an island, and it can also occur if $R$ lies inside $N$ or if $N$ lies inside $R$. We adjust $S_i^R$ to be the sequence $\{N, N, N\}$. If $N$ is not the exterior region, let $j$ be the index such that $B_j^N$ intersects $R$. Adjust $S_j^N$ to be the sequence $\{R, R, R\}$ to compensate for the adjustment that we made to $S_i^R$.

\item {(Case 2)} If $|S_i^R| = 2$, let $N_1$ and $N_2$ be the two elements of $S_i^R$. If $B_i^R$ intersects $R$, then {$R$ is adjacent to the exterior; without loss of generality, let $N_1$ denote the exterior region.} For example, in Figure~\ref{fig:geo_koreatown}, $S_0^{\text{Little Bangladesh}} = \{\text{Koreatown}, \text{Wilshire Center}\}$. We adjust $S_i^R$ to be the sequence $\{N_1, N_1, N_2\}$. If $N_1$ is not the exterior region, which occurs if $R$ is not adjacent to the exterior, then we also adjust $S_j^{N_1}$ to compensate, where $j$ is the index of the boundary component of $N_1$ that intersects $R$. In this case, we adjust $S_j^{N_1}$ by repeating $R$ an additional time. 
\end{enumerate}


\subsection{Construction of $\K$ from the Set
$\{P^R \mid R \in S\}$}\label{sec:lemmas}

We present two lemmas that we used in section \ref{scom_construction} to construct $\K$ by gluing together the set $\{P^R \mid R \in S\}$ of polygons with holes.

\begin{lemma}\label{lem:enclosure}
Let $R_1$ and $R_2$ be connected regions in a set $S$ that
satisfies assumptions \ref{geo_firstcond}--\ref{geo_lastcond}. Let $D_0, \ldots, D_h$ be the disks in the statement of \ref{geo_bdrycond} for $R_1$. {It is then the case that exactly one of the following statements} is true:
\begin{enumerate}
    \item $R_2 \subseteq \intr(D_0)^c$ and $R_2 \cap \intr(D_i) = \emptyset$ for all $i > 0${; or}
    \item {there} is an $i > 0$ such that $R_2$ is enclosed in $D_i$ and $R_2 \cap \intr(D_j) = \emptyset$ for all $j \neq i$.
\end{enumerate}
\end{lemma}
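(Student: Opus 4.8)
The plan is to reduce the whole statement to understanding which connected piece of $\mathbb{R}^2 \setminus R_1$ can contain the connected open set $\intr(R_2)$, and then to recover the closed-set containments by taking closures. First I would record the structural facts I will use about $R_2$. Since $R_2$ is connected and $S$ satisfies \ref{geo_bdrycond}, the region $R_2$ is homeomorphic to a disk with holes; hence $\intr(R_2)$ is nonempty and connected and $R_2 = \overline{\intr(R_2)}$. Because $R_1$ and $R_2$ are distinct elements of $S$, assumption (A3) gives $\intr(R_1) \cap \intr(R_2) = \emptyset$. I would then strengthen this to $\intr(R_2) \cap R_1 = \emptyset$: if a point $p \in \partial R_1$ were to lie in the open set $\intr(R_2)$, then some neighborhood of $p$ would be contained in $R_2$, and since $R_1 = \overline{\intr(R_1)}$ that neighborhood would meet $\intr(R_1)$, contradicting (A3). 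Thus $\intr(R_2) \subseteq \mathbb{R}^2 \setminus R_1$.

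Next I would decompose the complement. Writing $R_1 = D_0 \setminus \bigcup_{i=1}^h \intr(D_i)$, one computes
\[
\mathbb{R}^2 \setminus R_1 = \big(\mathbb{R}^2 \setminus D_0\big) \cup \bigcup_{i=1}^h \intr(D_i)\,.
\]
Setting $U_0 = \mathbb{R}^2 \setminus D_0$ and $U_i = \intr(D_i)$ for $i \geq 1$, each $U_k$ is open and connected. They are pairwise disjoint: $\intr(D_i) \subseteq \intr(D_0)$ separates the $U_i$ from $U_0$, and \ref{geo_bdrycond} forces $D_i \cap D_j$ to be at most a single point, whence $\intr(D_i) \cap \intr(D_j) = \emptyset$ for $i \neq j$. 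A disjoint union of connected open sets is its own decomposition into connected components, so $U_0, \ldots, U_h$ are exactly the components of $\mathbb{R}^2 \setminus R_1$. Since $\intr(R_2)$ is connected, it lies entirely inside a single $U_k$.

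The two alternatives then fall out by taking closures. If $\intr(R_2) \subseteq U_0$, then $R_2 = \overline{\intr(R_2)} \subseteq \overline{U_0} = \intr(D_0)^c$; and since $\intr(D_i) \subseteq \intr(D_0)$ for every $i > 0$, we get $R_2 \cap \intr(D_i) = \emptyset$, which is alternative (1). If instead $\intr(R_2) \subseteq U_i$ for some $i > 0$, then $R_2 \subseteq \overline{U_i} = D_i$, so $R_2$ is enclosed in $D_i$; and for $j \neq i$ the at-most-a-point intersection yields $D_i \cap \intr(D_j) = \emptyset$, so $R_2 \cap \intr(D_j) = \emptyset$, which is alternative (2). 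The alternatives are mutually exclusive, since (2) forces $\emptyset \neq \intr(R_2) \subseteq \intr(D_i) \subseteq \intr(D_0)$ whereas (1) forces $R_2 \subseteq \intr(D_0)^c$. The step I expect to require the most care is the component decomposition: I must rule out the possibility that a hole touching $\partial D_0$, or two holes touching each other at a point (the configuration in Figure~\ref{fig:triangulate}), secretly merges two of the $U_k$ into one component. This is precisely where the ``at most one point'' clause of \ref{geo_bdrycond} is essential, since an isolated touching point belongs to none of the open sets $U_k$ and therefore cannot connect them.
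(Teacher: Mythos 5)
Your proof is correct and follows essentially the same route as the paper's: the paper likewise covers $\intr(R_2)$ by the pairwise disjoint sets $\intr(D_0)^c, \intr(D_1), \ldots, \intr(D_h)$ and invokes the connectedness of $\intr(R_2)$ to place it in exactly one of them. Your version is marginally more careful---you use the pairwise disjoint \emph{open} cover $\mathbb{R}^2 \setminus D_0, \intr(D_1), \ldots, \intr(D_h)$ of $\mathbb{R}^2 \setminus R_1$ (so the connectedness step is immediate, whereas the paper mixes the closed set $\intr(D_0)^c$ with open sets) and you make explicit the passage from $\intr(R_2)$ to $R_2$ via $R_2 = \overline{\intr(R_2)}$, both of which the paper leaves implicit.
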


\begin{proof}
Because the interiors of $R_1$ and $R_2$ do not intersect, {it must be true} that $\intr(R_2) \subseteq \intr(D_0)^c \cup \Big(\bigcup_{i=1}^h \intr(D_i)\Big)$. Therefore, 
\begin{equation*}
    \intr(R_2) = \Big(\intr(D_0)^c \cap \intr(R_2)\Big) \cup \Big( \bigcup_{i=1}^h \intr(D_i) \cap \intr(R_2)\Big){\,.}
\end{equation*}    
The {claim} follows because $\intr(R_2)$ is connected and $\intr(D_0)^c, \intr(D_1), \ldots, \intr(D_h)$ are {pairwise disjoint}.
\end{proof}

\begin{lemma}\label{lem:unique_edge}

Let {$P^R$ be the annotated polygon with holes} for a connected region $R$, let $v$ be a vertex in $P^R$, and let $\{R, N_1, \ldots, N_n\}$ be the sequence of region adjacencies for $v$. If $n \geq 2$ and $N_1, \ldots, N_n$ are connected, then $P^R$ has at most one other vertex $w$ with the same set of region adjacencies. Additionally, if $w$ exists, its sequence of region adjacencies must be $\{R, N_n, \ldots, N_1\}$, which is the mirror of the orientation of neighbors around $v$.
\end{lemma}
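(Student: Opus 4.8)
The plan is to reduce the statement to the combinatorics of a single boundary cycle of $P^R$ and then exploit the connectedness of the neighbors. First I would invoke Lemma \ref{lem:enclosure}: since $R, N_1, \ldots, N_n$ all meet at the single geographical point corresponding to $v$, that point lies on exactly one of the boundary circles $B_0, \ldots, B_h$ of $R$, and by Lemma \ref{lem:enclosure} each $N_i$ meets only that circle. Hence any vertex $w$ whose adjacency set equals $\{R, N_1, \ldots, N_n\}$ must lie on the same boundary cycle of $P^R$, so it suffices to argue about the cyclic sequence of edge annotations along that one cycle. I would then record how an adjacency sequence is read off from this cycle: the region $R$ occupies a single angular sector at each vertex $x$ (so $R$ appears once), the two edges of $P^R$ incident to $x$ are annotated by the neighbors immediately following and immediately preceding $x$ in clockwise order, and the remaining entries of the sequence are precisely the neighbors whose intersection with $R$ is $0$-dimensional at $x$; by assumption \ref{geo_lastcond} these meet $R$ only at that isolated point and are listed, in clockwise order, between the two incident edges.

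The core of the proof is to produce the mirrored vertex and to show it is unique. Writing the clockwise sequence at $v$ as $\{R, N_1, \ldots, N_n\}$, its two incident edges are annotated $N_1$ and $N_n$. I would single out the connected boundary arc along which a ``sandwiched'' neighbor meets $R$ --- in Figure \ref{fig:koreatown} this is the arc $R \cap \text{Little Bangladesh}$, flanked on its far side by the same region (Wilshire Center) at both ends --- and follow it to its opposite endpoint $w$. Because each $N_i$ is connected and this arc is a single component of some intersection $R \cap N_j$, the regions meeting $R$ at $w$ are again exactly $N_1, \ldots, N_n$; traversing the arc from its two ends induces opposite orientations, so the neighbors following and preceding $w$ are $N_n$ and $N_1$ (swapped relative to $v$) and the intermediate $0$-dimensional neighbors reappear in reversed order. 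This yields the mirrored sequence $\{R, N_n, \ldots, N_1\}$. Uniqueness --- at most one such $w$ --- then follows because the connecting arc is a single component with exactly two endpoints, one of which is $v$.

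The main obstacle, and the step I expect to be delicate, is isolating this connecting arc and ruling out a third vertex with the same adjacency set. The subtlety is genuine precisely because an intersection $R \cap N_i$ can be disconnected (as for Koreatown and Wilshire Center in Figure \ref{fig:koreatown}, whose intersection has two components), which is exactly what allows an adjacency set to repeat at all; following the ``wrong'' incident edge of $v$ leads instead to a vertex with a different set. I would need the connectedness hypothesis on the $N_i$, together with assumption \ref{geo_lastcond} confining the $0$-dimensional neighbors to isolated points and Lemma \ref{lem:enclosure} confining everything to one boundary cycle, to show that \emph{exactly one} incident edge of $v$ continues to a vertex carrying the full set $\{R, N_1, \ldots, N_n\}$ and that no third occurrence can arise. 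Making this ``exactly one arc'' claim precise, via a short orientation argument along the boundary cycle, is where the real work lies.
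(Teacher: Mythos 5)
Your opening step---using Lemma \ref{lem:enclosure} to force any candidate $w$ onto the same boundary circle of $R$ as $v$---matches the first part of the paper's proof. But the core of your argument has a genuine gap. You propose to locate $w$ as the far endpoint of a single ``sandwiched'' arc component of $R \cap N_j$ emanating from $v$, and you justify this by asserting that, because each $N_i$ is connected, ``the regions meeting $R$ at $w$ are again exactly $N_1, \ldots, N_n$.'' That implication is false in general: connectedness of $N_i$ does not force $N_i$ to touch both endpoints of an arc of $R \cap N_j$, and indeed the far endpoint of an incident arc typically carries a \emph{different} adjacency set (you note this yourself for the ``wrong'' edge, e.g.\ the Wilshire Center arc in Figure \ref{fig:koreatown}, but you give no criterion that identifies the ``right'' one, and in the generic case where no second vertex with the same set exists, \emph{every} incident arc is the wrong one). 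The lemma is a uniqueness statement conditional on the existence of $w$; trying to construct $w$ is already off track.

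More importantly, your uniqueness argument (``the connecting arc has exactly two endpoints'') presupposes the very thing that needs proving: that any two vertices sharing the adjacency set $\{R, N_1, \ldots, N_n\}$ must be the two endpoints of a \emph{common} arc component of some $R \cap N_j$. A priori, with $n = 2$ one could imagine an alternating configuration of arcs $N_1, N_2, N_1, N_2$ around $\partial R$, producing four junction vertices that all carry the set $\{R, N_1, N_2\}$ while no two opposite ones share an arc; if that were realizable, the lemma itself would fail. Excluding it is exactly where the planarity of the configuration must enter, and this is what the paper's proof does: for each pair $N_{i_1}, N_{i_2}$ it takes the boundary circles of these neighbors (each of which is shown to pass through both $v'$ and $w'$), decomposes them into paths $\gamma_1 \cup \gamma_2$ and $\gamma_3 \cup \gamma_4$ from $v'$ to $w'$, and uses a path $\tau$ on $\partial R$ to show that the induced permutation of indices is order-reversing; a hypothetical third vertex would then have to mirror both $v$ and $w$ simultaneously, which is impossible for $n \geq 2$. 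Your proposal correctly flags that making the ``exactly one arc'' claim precise is where the real work lies, but the route you sketch does not contain the orientation-reversal argument that constitutes that work.
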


\begin{proof}
{
Suppose that $w$ is another vertex in $P^R$ with the same set of region adjacencies as $v$. Let $v'$ and $w'$ denote the points on the boundary of $R$ that correspond, respectively, to $v$ and $w$. Let $R_0$ be any connected region that is adjacent to both $v'$ and $w'$, let $D_0, D_1, \ldots, D_h$ denote the disks in the statement of \ref{geo_bdrycond} for $R_0$, and let $B_i = \partial D_i$. Suppose that $v'$ is in $B_i$. If $i = 0$, then there is a neighboring region $N$ that is {contained entirely} in $\intr(D_0)^c$ (by Lemma \ref{lem:enclosure}) and adjacent to $v'$. If $i > 0$, then there is a neighboring region $N$ that is {contained entirely} in $\intr(D_i)$ (by Lemma \ref{lem:enclosure}) and adjacent to $v'$. In either case, $w' \in B_i$ because $w'$ is also adjacent to $N$. Let $B_{i_1}, \ldots, B_{i_m}$ be the boundaries that contain $v'$. {{As we just showed,} it must also be true that} $w' \in B_{i_1}, \ldots, B_{i_m}$. If $m > 1$, then $w' \not\in B_{i_1} \cap \cdots \cap B_{i_m}$ because $D_{i_1} \cap \ldots \cap D_{i_m}$ is a single point by assumption \ref{geo_bdrycond}{; this} is a contradiction. This argument shows that if $v$ and $w$ have the same set of region adjacencies, then there is a unique $B_i$ that contains $v'$, there is a unique $B_j$ that contains $w$, and $B_i = B_j$.

Let $B$ be the disk boundary of $R$ that contains $v$ and $w$.} Either the interior of $R$ is contained in the region that is bounded by $B$ {or} it is contained in the complement of the region that is bounded by $B$. Without loss of generality, we suppose that the former is true. Let $\pi$ be the permutation of $\{1, \ldots, n\}$ such that the sequence of region adjacencies around $w$ is $\{R, N_{\pi(1)}, \ldots, N_{\pi(n)}\}$. Let $i_1, i_2 \in \{1, \ldots , n\}$, with $i_1 < i_2$, be a pair of indices. {By the argument above (with $R_0 = N_{i_1}$), there is a unique disk boundary $B_1$ for $N_{i_1}$ that contains $v'$ and $w'$. Similarly, there is a unique disk boundary $B_2$ for $N_{i_2}$ that contains $v'$ and $w'$. We have that $v', w' \in B_1 \cap B_2$.}

Because $B_1$ is homeomorphic to $S^1$, there exist paths $\gamma_1$ {and} $\gamma_2$ from $v'$ to $w'$ such that $\gamma_1 \cup \gamma_2 = B_1$. Because the interior of $N_{i_1}$ does not intersect $R$, it follows that $\gamma_1$ and $\gamma_2$ are both in the complement of the region that is bounded by $B'$. There are two paths from $v'$ to $w'$ on $B'$. Let $\tau$ be the unique choice of path such that $R$ is not contained in the region that is bounded by the closed curve $\tau \cup \gamma_1$. Either $\gamma_1$ is in the region that is bounded by the closed curve $\tau \cup \gamma_2$ or $\gamma_2$ is in the region that is bounded by the closed curve $\tau \cup \gamma_1$. Without loss of generality, we suppose that the latter is true.

Analogously to our argument above, there exist paths $\gamma_3$, $\gamma_4$ from $v'$ to $w'$ such that $\gamma_3 \cup \gamma_4 = B_2$ and $\gamma_3$ and $\gamma_4$ are in the complement of the region that is bounded by $B$. Because $B_2$ is homeomorphic to $S^1$, the paths $\gamma_3$ and $\gamma_4$ are either both contained in the region that is bounded by $\gamma_1 \cup \tau$ or both contained in the complement of the region that is bounded by $\gamma_2 \cup \tau$. Because $i_2 > i_1$, it must be the former case. Therefore, $\pi(i_2) < \pi(i_1)$. It follows that $\pi$ is order-reversing. If there were another vertex $x$ in $B$ that is adjacent to the same set of regions, then the orientation of those regions around $x$ would be the mirror of both the orientation of regions around $v$ and the orientation of regions around $w${. This} gives a contradiction when $n \geq 2$.
\end{proof}

For example, let $R$ be the region Koreatown in Figure~\ref{fig:geo_koreatown}. The two vertices that are shared by Koreatown and Little Bangladesh have the same region adjacencies, but they have mirrored orientations.

\section{Alternative Topological Approaches}\label{sec:alt_choices}

\subsection{0D Persistent Homology}\label{sec:0D}

{{We do} not compute 0D PH in the present {paper. However, it is appropriate to use 0D PH to study the structure of local extrema} when one is not interested in their geographical locations.}

Let $F$ be a real-valued function on a set $S$ of geographical regions. {In section \ref{sec:sublevel} (respectively, section \ref{sec:superlevel}), we described how one can analyze the local maxima (respectively, local minima) of $F$ by computing the 1D PH of the sublevel-set filtration (respectively, superlevel-set filtration). We now discuss how the 0D PH of the sublevel-set filtration (respectively, superlevel-set filtration) yields information about local minima (respectively, local maxima) of $F$.}

The 0D PH of the {sublevel-set} filtration encodes information about the structure of local minima of $F$ in a way that is similar to {how} 1D PH encodes information about the structure of local maxima. One can imagine taking $\alpha$-sublevel sets of the function in Figure \ref{fig:superlevel_ex} (where we display $\alpha$-\emph{super}level sets) to see why this is true. A region $R$ is a \emph{local minimum} if the value of $F(R)$ is less than the value of $F(N)$ for all neighboring regions $N$ of $R$ for which $N \cap R$ is 1D. If $R$ is a local minimum, there is a 0D homology class whose birth simplex is one of the vertices in one of the triangles in the preimage $g^{-1}(R)$. The class is born at filtration level $\alpha = F(R)$. For the LA data set of COVID-19 case rates, 0D homology classes correspond to regions that have a lower case rate than {neighboring} regions. The smaller the value of $F(R)$ in comparison to the {neighboring} regions, the more persistent the homology class is. There is also one infinite 0D homology class for each connected component. One can think of these classes as corresponding to a ``local minimum'' in the exterior region. However, unlike for 1D homology classes, there is no canonical map from 0D homology classes to regions because the birth simplex of a 0D class is a vertex that belongs to several regions. {Analogously, the} 0D PH of the {superlevel-set} filtration {encodes} information about the structure of local maxima of $F$. However, as with {a sublevel-set} filtration, there is no canonical map from 0D homology classes to regions. {Therefore, one cannot easily use the 0D PH of the sublevel-set filtration (respectively, superlevel-set filtration) to identify the geographical locations of the local minima (respectively, local maxima), so we did not examine 0D PH in our case studies.}


\subsection{Alternative Filtrations for Disconnected Geographical Spaces}\label{sec:exterior}

In section \ref{sec:sublevel} (respectively, section \ref{sec:superlevel}), we defined a {sublevel-set filtration (respectively, superlevel-set filtration)} in which we set the filtration values of all exterior-adjacent vertices and edges to the global minimum (respectively, {to the} {additive inverse of the} global maximum) of $F$. In applications in which the union of all regions is not connected, such as for the NYC zip codes in section \ref{sec:nyc}, an alternative definition is to consider extrema on each connected component separately, rather than on the entire geographical space at once. This solves the problem that an isolated region ({i.e.,} a geographical island\footnote{These are literal islands, rather than ``islands'' from a PH computation.}) is trivially both a local maximum and a local minimum because it is not adjacent to any other regions. In Definitions \ref{def:sublevel} and \ref{def:superlevel}, they appear as 1D homology classes that are born at the earliest filtration time{; this} may falsely emphasize the persistence of these trivial extrema.

\begin{definition}[Alternative {Sublevel-Set} Filtration]\label{def:altsublevel}
Let $\K$ be the \scom\ from section \ref{scom_construction} for a set $S$ of regions, and let $g$ be the assignment of 2D simplices to regions. {Additionally, let} $F: S \to \mathbb{R}$. If $\sigma$ is a vertex or edge on the boundary of $\K$, let $\tilde{\sigma}$ be the 2D simplex for which $\sigma$ is on the boundary of $\tilde{\sigma}$. On $\sigma$, we define the alternative {sublevel-set} filtration function $f$ to be
\begin{equation*}
    f(\sigma) = \min_R \{F(R) \mid R \subseteq C\}\,,
\end{equation*}
where $C$ is the connected component that contains {the region} $g(\tilde{\sigma})$. On all other simplices, the filtration function $f$ is equal to the {sublevel-set} filtration function.
\end{definition}

\begin{definition}[Alternative {Superlevel-Set} Filtration]\label{def:altsuperlevel}
Let $F: S \to \mathbb{R}$ for a set $S$ of regions. The alternative {superlevel-set} filtration function $f$ is the the alternative {sublevel-set} filtration function that is induced by $-F$.
\end{definition}

Definitions~\ref{def:altsublevel} and~\ref{def:altsuperlevel} are appropriate options if one seeks to treat each connected component independently. In these alternative definitions, each connected component uses only information about other regions in the same component. One then compares region values $F(R)$ to global extremum values on their connected components. One consequence of using these definitions is that one ignores isolated regions, which are trivial extrema. In Definitions~\ref{def:altsublevel} and~\ref{def:altsuperlevel}, these isolated extrema appear as points on the diagonal of a PD. This is often an appropriate way to handle isolated regions. However, when an isolated region is a global extremum of a data set, this may be undesirable. This situation never occurs in our data.

{NYC} has 14 connected components; several of {them} are zip codes that correspond to isolated islands. The alternative {sublevel-set and superlevel-set} filtrations effectively treat each connected component of NYC separately. In Figures~\ref{fig:nyc_sublevel_alt} and~\ref{fig:nyc_superlevel_alt}, we show the PDs that we compute using the alternative {sublevel-set and superlevel-set} filtrations that are induced by the vaccination-rate function that we defined in section \ref{sec:nyc}. In these PDs, we compare a zip code's per capita vaccination rate to the global minimum or maximum rate on its connected component, rather than {to} the global extremum in all of NYC. More precisely, the birth time of a connected component's global extremum is either the lowest per capita vaccination rate of that component (for the alternative {sublevel-set} filtration) or the additive inverse of the highest per capita vaccination rate of that component (for the alternative {superlevel-set} filtration). Consequently, the trivial island extrema yield homology classes on the diagonal {of a PD}.

\begin{figure}
    \centering
    \subfloat[Alternative sublevel filtration]{\includegraphics[width = .5\textwidth]{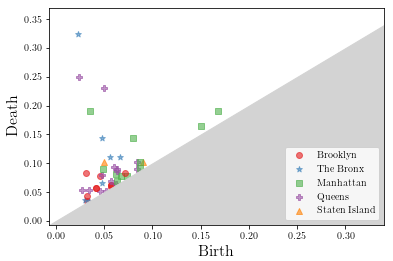}\label{fig:nyc_sublevel_alt}}
    \subfloat[Alternative superlevel filtration]{\includegraphics[width = .5\textwidth]{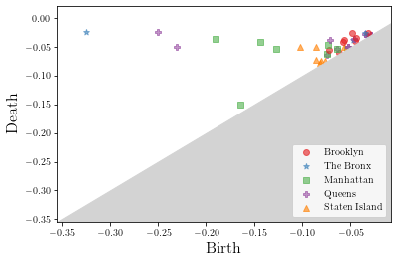}\label{fig:nyc_superlevel_alt}}
     \caption{PDs for the 1D PH of the NYC simplicial complex with filtrations that are induced by the per capita full vaccination rate by zip code on 23 February 2021. We show only the finite homology classes. {Each point in a PD corresponds to a non-isolated zip code, which we label according to its borough \cite{boroughs}, that has (a) a higher vaccination rate than its neighboring zip codes or (b) a lower vaccination rate than its neighboring zip codes.}
    }
\end{figure}

{The alternative sublevel-set filtration and the alternative superlevel-set filtration, along with their time-dependent versions, are implemented in our code at \url{https://bitbucket.org/ahickok/vineyard/src/main/}.}


\subsection{Multiparameter Persistent Homology}\label{sec:multi}

One can use {multiparameter persistent homology (MPH)} to study how the topology of a data set changes as one varies multiple parameters. {For a review of MPH, see \cite{multi}.} 

One can use {MPH} to study local extrema of functions that are nondecreasing {with} time. {To apply MPH} to our COVID-19 case-rate data, two feasible parameters are (1) time and (2) the cumulative COVID-19 case rate.  However, {MPH} {is difficult} to analyze{. Although} there are invariants (e.g., the rank invariant), {there is no complete discrete invariant} \cite{multi}. By contrast, one {can use PDs} for single-parameter PH.

\begin{definition}\label{def:multi}
Let $\K$ be the \scom\ from the construction in section \ref{scom_construction} for a set $S$ of regions. Let $F: \{t_0, \ldots, t_n\} \times S \to \mathbb{R}$ be a function {such that} $F(t, R) \geq F(s, R)$ for all $t \geq s$. Define the function $f(t_i, \sigma)$ to be the {sublevel-set} filtration that is induced by $F(t_i, \cdot)$. Let $\{\alpha_0, \ldots, \alpha_{\ell} \}$ be the image of $F$, where $\ell+1$ is the number of elements in the image. We define the bifiltration
\begin{equation*}
    \K_{i, j} := \begin{cases}
            \{ \sigma \in \K \mid f(t_i, \sigma) \leq \alpha_j \} \,, & i \in \{0, \ldots, n\}\,, \, j \in \{0, \ldots, \ell \} \\
            \K \,, & j > \ell \text{ and } i \geq 0 \\
            \K_{n, j} \,, & i > n \text{ and } j \geq 0 \\
            \emptyset \,, & i<0 \text{ or } j < 0 \,.
        \end{cases}
\end{equation*}
\end{definition}

One can use Definition~\ref{def:multi} to study cumulative COVID-19 case rates {as a function of} time.

\subsection{Multiparameter Zigzag Persistent Homology}\label{sec:zigzag}

{One can use \emph{multiparameter zigzag PH} {(MZPH)} to study how the topology of a data set changes as one varies multiple parameters nonmonotonically. See Section 2.1 of \cite{zigzag} for a short discussion of {MZPH}.}
 
{To use MZPH to study} our COVID-19 case-rate data, two feasible parameters are (1) time and (2) the current COVID-19 case rate. {A diagram of simplicial complexes, such as in Equation~\ref{eq:zigzag_diagram}, induces a diagram of homology groups.} This is a representation of a quiver. However, there are no known well-behaved statistical summaries (in contrast to single-parameter zigzag PH).

\begin{definition}\label{def:zigzag}
{Let $\K$ be the \scom\ from the construction in section \ref{scom_construction} for a set $S$ of regions}, and suppose that $F: \{t_0, \ldots, t_n\} \times S \to \mathbb{R}$. Define half steps $t_{i + 1/2} := t_i + (t_{i+1} - t_i)/2$ for $i \in \{0, \ldots, m-1\}$, and let $s_i := t_{i/2}$. Define the function $G: \{s_0, \ldots, s_{2n}\} \times S \to \mathbb{R}$ as follows:
\begin{align*}
    G(s_i, R) &= \begin{cases}
        F(s_i, R) \,, & i \text{ is even} \\
        \max\{ F(t_{(i-1)/2}, R), F(t_{(i+1)/2}, R) \} \,, & i \text{ is odd}\,.
    \end{cases}
\end{align*}

We define the function ${h}(s_i, \cdot)$ to be the {sublevel-set} filtration that is induced by $G(s_i, \cdot)$. Let $\{\alpha_0, \ldots, \alpha_{\ell} \}$ be the image of $G$. We define
\begin{equation*}
    \K_{i, j} := \begin{cases}
        \{ \sigma \in \K \mid {h}(s_i, \sigma) \leq \alpha_j \} \,, & i \in \{0, \ldots, 2n\}\,, \, j \in \{0, \ldots, \ell \} \\
        \K \,, & j > \ell \text{ and } i \geq 0\\
        \K_{2n, j} \,, & i > 2n \text{ and } j \geq 0 \\
        \emptyset \,, & i<0 \text{ or } j < 0\,.
    \end{cases}
\end{equation*}
This yields the following diagram:
\begin{equation}\label{eq:zigzag_diagram}
\begin{tikzcd}
{}                                                                      & {}                                                                                      & {}                                                                      & {}                                                                      &    \\
{K_{\alpha_0, s_3}} \arrow[d, hook] \arrow[u, hook] \arrow[r, hook]      & {K_{\alpha_1, s_3}} \arrow[d, hook] \arrow[u, hook] \arrow[r, hook]                        & {K_{\alpha_2, s_3}} \arrow[d, hook] \arrow[u, hook] \arrow[r, hook]       & {K_{\alpha_3, s_3}} \arrow[d, hook] \arrow[u, hook]                 &    \\
{K_{\alpha_0, s_2}} \arrow[r, hook]                                     & {K_{\alpha_1, s_2}} \arrow[r, hook]                                                     & {K_{\alpha_2, s_2}} \arrow[r, hook]                                     & {K_{\alpha_3, s_2}} \arrow[r, hook]                                     & {} \\
{K_{\alpha_0, s_1}} \arrow[d, hook] \arrow[u, hook] \arrow[r, hook] & {K_{\alpha_1, s_1}} \arrow[d, hook] \arrow[u, hook] \arrow[r, hook] \arrow[d, hook] & {K_{\alpha_2, s_1}} \arrow[d, hook] \arrow[r, hook] \arrow[u, hook] & {K_{\alpha_3, s_1}} \arrow[d, hook] \arrow[u, hook] \arrow[r, hook] & {} \\
{K_{\alpha_0, s_0}} \arrow[r, hook]                                     & {K_{\alpha_1, s_0}} \arrow[r, hook]                                                     & {K_{\alpha_2, s_0}} \arrow[r, hook]                                     & {K_{\alpha_3, s_0}} \arrow[r, hook]                                     & {}
\end{tikzcd}\,.
\end{equation}
The inclusion maps induce a corresponding diagram of homology groups.
\end{definition}

One can use Definition~\ref{def:zigzag} to study non-cumulative COVID-19 case rates {as a function of} time.


\section{The Full LA Vineyard}\label{sec:fullLA}

In Figure \ref{fig:la_vy_full}, we {show} the full vineyard {that we} discussed in section \ref{sec:LA}.

\begin{figure}
    \centering
    \subfloat[]{\includegraphics[height =.34\paperheight]{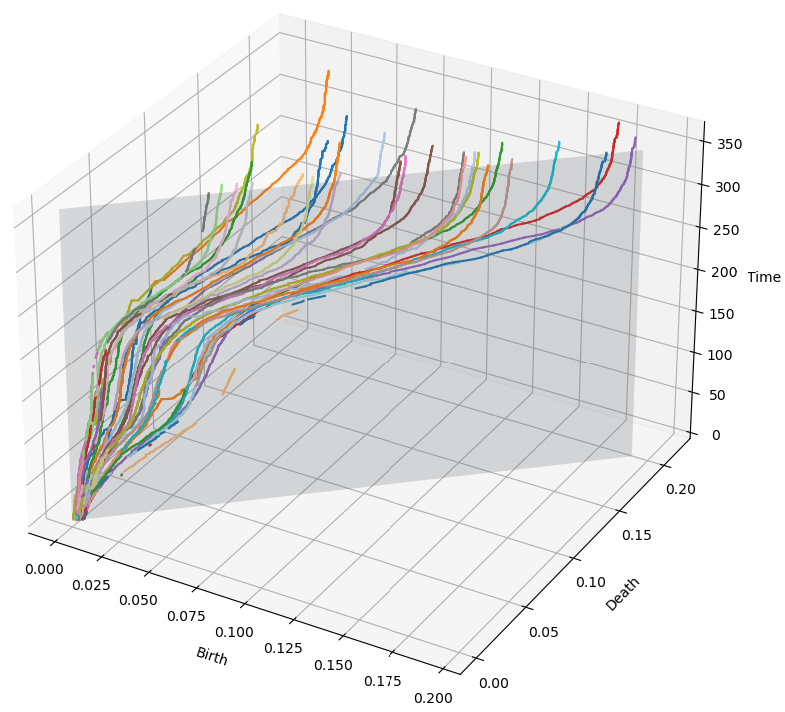}} \\
    \subfloat[]{\includegraphics[height = .34\paperheight]{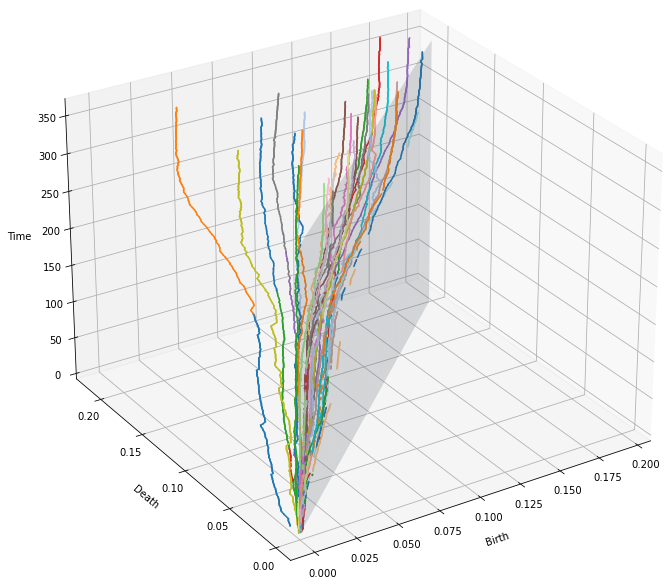}}
    \caption{(a) The vineyard for the LA simplicial complex that we construct using the sublevel-set filtration from the 14-day mean per capita case rate during the period 25 April 2020--25 April 2021. Each vine {is associated with} a COVID-19 anomaly. We color each vine according to the geographical location(s) of its associated anomaly. (See Figure \ref{fig:la_vy_legend} for the legend.) Because the geographical location of an anomaly can change {with} time, a single vine can have multiple colors. (b) A different view of the same vineyard.}
    \label{fig:la_vy_full}
\end{figure}

\begin{figure}
    \centering
    \includegraphics[width = \textwidth]{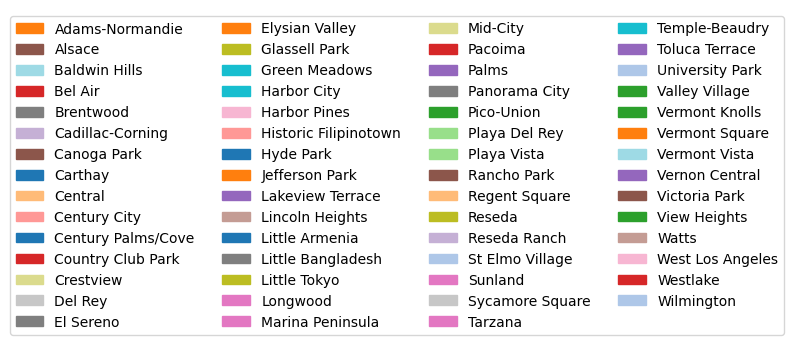}
    \caption{The legend for Figure \ref{fig:la_vy_full}. Each of the depicted regions is a local maximum of the COVID-19 case-rate function for some subset of the time period 25 April 2020--25 April 2021.}
    \label{fig:la_vy_legend}
\end{figure}


{\section{{Results of an} All-But-One Statistical Test}

Previously, we {examined} local extrema of real-valued geospatial data; we called these ``anomalies''. For real-valued geospatiotemporal data, one can alternatively {examine} a different notion of anomaly. {In this context}, we say that a region is an \emph{anomaly} if one is not able to infer its data successfully from the data of the other regions. More precisely, let $X$ be the matrix whose $(i, j)$th entry is the value of region $j$ at time step $i$. In our case study of COVID-19 case rates {in LA}, the regions are the neighborhoods of LA and the $(i, j)$th entry of $X$ is the 14-day mean per capita case rate in region $j$ on the $i$th day after $25$ April 2020. {Let $\bm{x^j}$ denote the $j$th column of $X$, and let $X^j$ denote the matrix that one obtains by deleting column $\bm{x^j}$. The vector $\bm{x^j}$ has the data for region $j$, and the matrix $X^j$ has the data for all regions except for region $j$.} We define our \emph{prediction} of region $j$ to be the least-squares solution {$\bm{b^*}$} to $X^j \bm{b} = \bm{x^j}$, and we quantify the predictability of region $j$ by calculating the relative residual norm $\norm{X^j\bm{b^*} - \bm{x^j}}_2\left.\right/\norm{\bm{x^j}}_2$. A smaller relative residual norm indicates greater predictability.

In Figure \ref{fig:allbutone}, we show the result of this ``all-but-one'' statistical test for the LA COVID-19 {data set}. In this figure, we plot the relative residual norm for each neighborhood. All neighborhoods have {near-$0$} relative residual norms, {so the} neighborhoods' case rates are {very} predictable {when one knows the case rates {of} all other} neighborhoods. The mean relative residual norm is only $5.970\times 10^{-7}$, with a standard deviation of $\sigma {\approx} 7.558\times 10^{-7}$. The neighborhoods {with the least predictability} ({specifically}, those whose relative residual norms have a z-score {that is} {larger} than 3) are Brookside, Little Armenia, Little Tokyo, Sycamore Square, and Toluca Terrace. We {show} their relative residual norms and z-scores in Table \ref{tab:allbutone}.

\begin{figure}
    \centering
    \includegraphics[scale=.8]{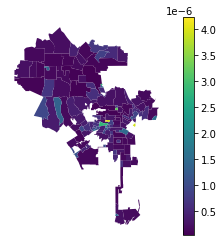}
    \caption{The results of {an} all-but-one statistical test for the LA COVID-19 case-rate data. We plot the relative residual norm for each neighborhood.}
    \label{fig:allbutone}
\end{figure}

\begin{table}[]
    \centering
    \begin{tabular}{||c|c|c||}
        \hline
         Neighborhood &  Relative Residual Norm & z-score \\
         \hline \hline
         Brookside & 3.973$\times 10^{-6}$ & 4.466	  \\
         \hline
         Little Armenia & 3.220$\times 10^{-6}$ & 3.471 \\
         \hline
         Little Tokyo & 3.944$\times 10^{-6}$ & 4.429 \\
         \hline
         Sycamore Square & 3.944$\times 10^{-6}$ & 4.429 \\
         \hline
         Toluca Terrace & 2.873$\times 10^{-6}$ & 3.012\\
         \hline
    \end{tabular}
    \caption{The relative residual norms and z-scores for the LA neighborhoods that are least predictability according to our all-but-one test.}
    \label{tab:allbutone}
\end{table}

{The difference between what we learn from the all-but-one statistical test and what we learn from our TDA approach is the following.} {Using} our TDA approach, we identified local extrema (i.e., regions whose values are {either all larger than or all smaller} than {those of} all neighboring regions); this is a geographical notion of anomaly. {By contrast,} the all-but-one statistical test {does not inherently} capture local extrema because the test does not consider geographical adjacencies. Despite this conceptual difference, {we observe} some overlap between the anomalies that the two approaches identify. For example, the neighborhoods Little Tokyo and Little Armenia {are} identified as anomalies by both approaches. For further {examples, compare} Figure \ref{fig:allbutone} with Figure \ref{fig:la_vy_map}.


\section{Demographic Data}\label{sec:demographic}

We provide some demographic data for NYC and LA for {readers} who are interested in studying patterns between the {PDs} and demographic data, {although an investigation of such patterns} is beyond the scope of the present paper. In Figure \ref{fig:median_income}, we plot the median income for each zip code\footnote{{We do not possess} median income data for LA zip codes 90073, 90089, 90095, 91330, 91522, {and} 91608. These zip codes are {in} non-residential areas. {For example,} 90073 {corresponds to} the Veterans Administration.} \cite{census_5year}. The geographical boundaries of the NYC and LA zip codes are given by the {\sc shapefile}s \cite{NYCshp} and \cite{LAzipshp}, respectively. It is worthwhile to examine and compare other demographic data (such as racial, religious, and political data) to the PDs.

\begin{figure}
    \centering
    \subfloat[]{\includegraphics[width=.5\textwidth]{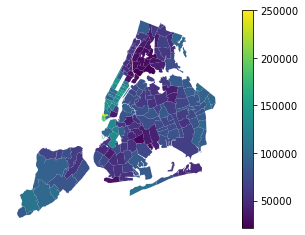}}
    \hspace{5mm}
    \subfloat[]{\includegraphics[width = .4\textwidth]{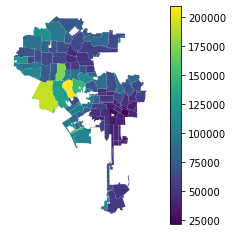}}
    \caption{{(a) Median household income by zip code in NYC. (b) Median household income by zip code in LA.}}
    \label{fig:median_income}
\end{figure}
}


\section*{Acknowledgements}

We thank Henry Adams, Heather Zinn Brooks, Michelle Feng, Lara Kassab, and Nina Otter for helpful discussions. Additionally, we are grateful to Michelle Feng for teaching us how to work with geospatial data. We thank the Los Angeles County Department of Public Health for providing the LA city data on COVID-19 and the population estimates of LA neighborhoods.



\end{document}